\pgfplotsset{compat=1.9}
\pgfplotsset{compat=newest}
\newlength\fwidth
\definecolor{plotclblue}{RGB}{57,106,180}
\colorlet{plotclblue}{black!20!plotclblue}
\definecolor{plotclorange}{RGB}{218,134,48}
\colorlet{plotclorange}{black!5!plotclorange}
\definecolor{plotclgreen}{RGB}{62,150,81}
\colorlet{plotclgreen}{black!10!plotclgreen}
\definecolor{plotclred}{RGB}{204,37,41}
\colorlet{plotclred}{black!10!plotclred}
\definecolor{plotclblack}{RGB}{83,81,84}
\colorlet{plotclblack}{black!10!plotclblack}
\definecolor{plotclviolet}{RGB}{107,76,154}
\colorlet{plotclviolet}{black!10!plotclviolet}
\definecolor{plotclredbrown}{RGB}{146,36,40}
\colorlet{plotclredbrown}{black!10!plotclredbrown}
\definecolor{plotclocer}{RGB}{148,139,61}
\colorlet{plotclocer}{black!10!plotclocer}
\definecolor{plotclyellow}{RGB}{255,255,30}
\colorlet{plotclyellow}{black!20!plotclyellow}
\definecolor{plotclcyan}{rgb}{000000,0.70000,0.70000}
\colorlet{plotclcyan}{black!20!plotclcyan}
\newcommand{\norm}[1]{\left\lVert#1\right\rVert}
\DeclareMathSymbol{\shortminus}{\mathbin}{AMSa}{"39}
\tikzset{
	nomorepostactions/.code={\let\tikz@postactions=\pgfutil@empty},
	mymark/.style 2 args={decoration={markings,
			mark= between positions 0 and 1 step (1/8)*\pgfdecoratedpathlength with{%
				\tikzset{#2,scale=1,every mark}\tikz@options
				\pgftransformresetnontranslations
				\pgfuseplotmark{#1}%
			},  
		},
		postaction={decorate},
		/pgfplots/legend image post style={
			mark=#1,mark options={#2, scale=1},every path/.append style={nomorepostactions}
		},
	},
}
\tikzset{%
	block-common/.style={draw, semithick, fill=white, minimum height=2em, minimum width=2em},
	block/.style={rectangle, block-common},
	txtblock/.style={block, align=center, minimum height=4em},
	txtblocktight/.style={block, align=center, minimum height=2.3em},
	input/.style={inner sep=1pt},
	output/.style={inner sep=1pt},
	sum/.style = {draw, fill=white, circle, minimum size=1.1em, inner sep=0pt,
		font={\small$+$}},
	prod/.style = {draw, fill=white, circle, minimum size=1.1em, inner sep=0pt,
		font={\normalsize$\times$}},
	pinstyle/.style = {pin edge={to-,thin,black}}
}
\newtheorem{theorem}{Theorem}
\newtheorem{definition}{Definition}
\newtheorem{lemma}[theorem]{Lemma}
\newtheorem{remark}{Remark}
\def\C{{\mathcal C}}
\def\D{{\mathcal D}}
\def\U{{\mathcal U}}
\def\X{{\mathcal X}}
\newcommand{\y}{\mathbf{y}}
\newcommand{\x}{\mathbf{x}}
\renewcommand{\u}{\mathbf{u}}
\renewcommand{\v}{\mathbf{v}}
\newcommand{\accept}{\mathrm{accept}}
\newcommand{\reject}{\mathrm{reject}}
\renewcommand*\FXLayoutMarginClue[3]{%
	\marginpar[%
	\raggedleft\@fxuseface{margin}\textcolor{black}{\ignorespaces#3 fixme}]{%
		\raggedright\@fxuseface{margin}\textcolor{black}{\ignorespaces#3 fixme}}}
\def \sG {\mathcal{G}}
\newtheorem{proposition}{Proposition}
\begin{document}
\title{Deterministic Identification Codes for Fading Channels\thanks{
The authors received support from the German Federal Ministry of Education and Research (BMBF) through the following projects: QTOK (Grants 16KISQ037K and 16KISQ038), QD-CamNetz (Grants 16KISQ077 and 16KISQ169), QUIET (Grants 16KISQ093 and 16KISQ0170), QTREX (Grants 16KISR026 and 16KISR038), QR.N (Grants 16KIS2195 and 16KIS2196), and QuaPhySI (Grants 16KIS1598K and 16KIS2234). Additionally, the authors acknowledge financial support from the BMBF under the program Souverän. Digital. Vernetzt. as part of the research HUB 6G-life (Project Identification Numbers: 16KISK002 and 16KISK263). H. Boche was further supported in part by the BMBF within the national initiative on Post Shannon Communication (NewCom) under Grant 16KIS1003K. C. Deppe was also supported by the DFG within the project DE1915/2-1. I.}%
}

\author{
\IEEEauthorblockN{Ilya Vorobyev\IEEEauthorrefmark{1}, Christian Deppe\IEEEauthorrefmark{3}\IEEEauthorrefmark{7}, and Holger Boche\IEEEauthorrefmark{2}\IEEEauthorrefmark{4}\IEEEauthorrefmark{5}\IEEEauthorrefmark{6}\IEEEauthorrefmark{7}}

\IEEEauthorblockA{
\IEEEauthorrefmark{1}\emph{IOTA Foundation}, Berlin\\
\IEEEauthorrefmark{2}\emph{Chair of Theoretical Information Technology}, Technical University of Munich\\
\IEEEauthorrefmark{3}Technical University of Braunschweig\\
\IEEEauthorrefmark{4}Cyber Security in the Age of Large-Scale Adversaries –
Exzellenzcluster, Ruhr-Universit\"at Bochum, Germany\\
\IEEEauthorrefmark{5}BMBF Research Hub 6G-life, Germany, 
\IEEEauthorrefmark{6} {\color{black}{Munich Quantum Valley (MQV)}} \\
\IEEEauthorrefmark{7}{\color{black}{ Munich Center for Quantum Science and Technology (MCQST) }}\\
Email: vorobyev.i.v@yandex.ru, christian.deppe@tu-braunschweig.de, boche@tum.de}
}
\maketitle

\begin{abstract}
   Many communication applications incorporate event-triggered behavior, where the conventional Shannon capacity may not effectively gauge performance. Consequently, we advocate for the concept of identification capacity as a more suitable metric for assessing these systems. We consider deterministic identification codes for the Gaussian AWGN, the slow fading, and the fast fading channels with power constraints. We prove lower bounds on capacities for the slow and the fast fading channels with side information for a wide range of fading distributions. Additionally, we present the code construction with efficient encoding which achieves the lower bound on capacity both for the slow and the fast fading channels.
   At last, we prove the same lower bound on the capacity of the fast fading channel without side information, i.e., the same lower bound holds even when the receiver does not know the fading coefficients.  
   As a result we show that compared with Shannon's message transmission paradigm we achieved completely different capacity scaling for deterministic identification codes for all relevant fading channels.
\end{abstract}

\section{Introduction}\label{sec::intro}


Numerous applications within the realm of Post Shannon communications \cite{schwenteck20236g,cabrera20216g}, envisioned for the forthcoming era of next-generation wireless networks (XG), are either founded upon or lead to event-triggered communication frameworks. The advancement to 6G technology promises an expansion in both IoT (Internet of Things) and Tactile Internet capabilities for consumers.  This suggests a scenario where both physical and virtual objects can be controlled over the network, facilitating the realization of autonomous systems~\cite{fettweis20216g,fettweis20226g}. Within the domain of communication systems, one pivotal measure of performance is ultra-reliability, ensuring data transmission with an extraordinarily low probability of failure~\cite{hassan2021key,adhikari20226g}. Notably, the development of methodologies to maintain outage probabilities below $10^{-5}$ holds significant importance in this context~\cite{bennis2018ultrareliable,salh2021survey}.
In these systems, the conventional measure of Shannon's message transmission capacity \cite{shannon1948mathematical} may not be suitable. Instead, the focus shifts to the concept of identification capacity as a pivotal quantitative metric. Particularly in scenarios involving event recognition, alarm triggering, or object detection, where the receiver's objective is to discern the presence or absence of a specific event, trigger an alarm, or confirm the existence of an object through a reliable binary decision, the identification capacity emerges as the primary performance indicator. The concept of identification also has the potential to meet the strict latency requirements~\cite{fettweis2014tactile} and significantly reduce energy consumption~\cite{schwenteck20236g}.

In contrast to the traditional Shannon communication paradigm \cite{shannon1948mathematical}, where the sender encodes messages for faithful reproduction by the receiver, the design of coding schemes in identification settings serves a different purpose. Here, the objective is to determine whether a specific message was transmitted or not. The concept of identification in communication theory, pioneered by Ahlswede and Dueck \cite{Idchannels}, incorporates randomization, wherein a randomized encoder selects codewords from a codebook composed of distributions over codewords. Notably, it has been observed that introducing local randomness at the encoder results in remarkably efficient identification, with the codebook size exhibiting a double-exponential growth relative to the codeword length. This stands in stark contrast to conventional message transmission problems \cite{shannon1948mathematical}, where the codebook size typically grows exponentially with the codeword length.

The motivation behind Ahlswede and Dueck's \cite{Idchannels} development of the Randomized Identification (RI) problem can be traced back to JaJa's work \cite{Jaja} on Deterministic Identification (DI), which considers codewords determined by deterministic functions of the messages from a communication complexity perspective. Ahlswede and Dueck aimed to demonstrate that leveraging randomness, akin to techniques in communication complexity, provides a significant advantage over the DI problem, yielding an exponential gap in codebook size. However, in complexity-constrained applications, DI may be preferred over RI.

Further exploration of RI code construction is detailed in \cite{VerduWei93,Derebeyo_lu_2020, von2023beyond}, while studies on RI for Gaussian channels are presented in \cite{burnashev2000identification, labidi2021identification, ezzine2020common}. Deterministic codes offer advantages such as simpler implementation, explicit construction, and reliable single-block performance. Ahlswede and Cai investigate DI for compound channels in \cite{ahlswede1999identification}. Remarkably, in \cite{labidi2021identification} it was established that the DI capacity is infinite irrespective of scaling considerations for the codebook size for the case of non-discrete additive white noise and noiseless feedback under both average and peak power constraints. 

The most basic channel model commonly used in practice is the Additive White Gaussian Noise (AWGN) model, where the received signal is presumed to undergo solely constant attenuation and delay \cite{fano1968transmission}. However, for digital transmission, a more sophisticated model is often required. In these scenarios, it becomes essential to consider additional propagation effects known as fading, which alters the envelope of the received signal. Depending on the statistical characteristics of fading, it is classified as either a fast or slow fading channel. The exploration of DI codes for such channels was started in~\cite{salariseddigh2021deterministic}. It was established that for fading channels the size of the code grows as $n^{C_{ID}n}$, and some bounds for the capacity $C_{ID}$ were established. We continue this line of research and prove new lower and upper bounds on capacity.


\subsection{Our contribution}
The first contribution of the paper is the construction of codes with certain properties. These codes can be efficiently constructed and encoded, i.e., with a complexity polynomial in length $n$. These codes are used to prove lower bounds on the capacities of deterministic identification codes.

The next contribution of this paper concerns the capacity of the slow fading channel with side information. In this paper side information means the knowledge of fading coefficients at the receiver. We introduce the definition of $\eta$-outage capacity and prove the lower bound $1/4$. The lower bound is achieved with a code construction mentioned above, i.e., codes that achieve this lower bound can be efficiently constructed and encoded. Even though the bound looks the same as the bound from~\cite{salariseddigh2021deterministic}, it can be applied to a wider class of fading distributions. This improvement is achieved, in part, since we consider another definition of a DI code. The detailed discussion on various definitions of DI code for the slow fading channel is in Subsection~\ref{subsec::discussion on definitions}.

Another important result is the lower bound $1/4$ on the capacity of fast fading channels with channel side information for a wide class of fading distributions. This bound is also constructive. Surprisingly, the lower bound does not change even if the probability that the fading coefficient equals zero is positive as long as it is less than one. The result applies to the practical fading distributions as well.

Moreover, we prove a new lower bound for the fast fading channels without channel side information, when the receiver does not know the fading coefficients. It is proved that we can achieve the same lower bound as in the case with channel side information. We prove the lower bound under the assumption that the mathematical expectation of the fading coefficient is not zero. It is a major restriction, which cuts off some important distributions. 

Finally, we refine the existing upper bound on the capacity of the Gaussian, slow fading, and fast fading channels.

\subsection{Outline}
The rest of the paper is structured as follows. In Section~\ref{sec::notations and defs} we give notations and key definitions. In Section~\ref{sec::new results} we formulate and discuss our new results. Section~\ref{sec::construction} describes an explicit code construction that is later used to prove lower bounds on the identification capacity. Section~\ref{sec::upp_bound_gauss} contains the improved upper bound on the capacity of the Gaussian AWGN channel. In Section~\ref{sec::fading with csi} the bounds on the capacities of the slow and fast fading channels with side information are proved. In Section~\ref{sec::fast fading without csi} we prove a lower bound on the capacity of the fast fading channel without side information. Section~\ref{sec::conclusion} contains final remarks and discussion about future research.

\section{Notations and definitions}\label{sec::notations and defs}
The distribution of a random variable (RV) $X$ is denoted by $P_X$; 
$\X^c$ denotes the complement of $\X$; if $X$ is a RV with distribution $P_X$, we denote the expectation of $X$ by $\mathbb{E} X$ and by ${\text{Var}}X$ the variance of $X$; the entropy of a random variable $X$ is denoted as $H(X)$; binary entropy function is denoted as $h(x)$. Bold symbols $\x$ are used to denote vectors; $\norm{\x}_p$ denotes $p$-norm of a vector; $[M]$ denotes the set $\{1, 2, \ldots, M\}$; $cl(H)$ denotes the closure of the set $H$.
All logarithms and information quantities are taken to the base $2$.

We consider the Gaussian AWGN channel, the slow fading channel, and the fast fading channels. 
All these channels can be described by equations
\begin{equation}
    y_t=h_tx_t+z_t,
\end{equation}
where $\{z_t\}_{t=1}^n$ is a sequence of i.i.d. random variables with normal distribution $N(0, \sigma^2)$, $\{h_t\}_{t=1}^n$ is a sequence of random variables with an arbitrary distribution. In the case of the slow fading $h_i=h$ for all $i\in [1, n]$, for the case of the fast fading $h_i$ is a sequence of i.i.d. random variables. For the Gaussian AWGN channel $h_i=1$. We introduce power constraints, i.e., all codewords $\x$ should satisfy inequality $\norm{\textbf{x}}_2^2\leq nA$.

We distinguish channels with and without side information. In the case of channels with side information the receiver knows fading coefficients, so the decoding regions can depend on the realization of random variables $h_i$.

This paper considers codes whose size scales as $L_{DI}(n, R) = n^{nR}$. That is different from the standard scaling in coding theory $L_{Tr}(n, R)=2^{nR}$ and the scaling used for randomized identification codes $L_{RI}(n, R)=2^{2^{Rn}}$. The scaling $L_{DI}(n, R)=n^{nR}$ is chosen since we can obtain non-trivial bounds on the code rate for such scaling. In other words, the maximal size of the deterministic identification codes for fading channels with error probabilities tending to zero grows as $L_{DI}(n, R)=n^{nR}$. This also means that the code size is significantly bigger than the sizes of the codes used in usual Shannon coding problems but not as big as in randomized identification. In other words, if we fix code size $M$ then for deterministic identification the length of the code $n$ would be significantly smaller than the length of the code of the same size for transmission. This also means that such a deterministic identification code will require a significantly smaller amount of energy since it is proportional to the length of the code.

\subsection{Gaussian DI Code}
\begin{definition}
\label{GdeterministicIDCode}
A $(L_{DI}(n, R)=n^{nR},n)$ DI code for a Gaussian channel $\sG$ under input constraint $A$, assuming $n^{nR}$ is an integer, is defined as a system $(\U,\mathscr{D})$ consisting of a codebook $\U=\{ \mathbf{u}_i \}_{i\in[n^{nR}]}$, $\U\subset \mathbb{R}^n$, such that
\begin{align} 
    \norm{\mathbf{u}_i}_2^2 \leq nA \;,\,
\end{align}
for all $i\in[n^{nR}]$ and a collection of decoding regions
$\mathscr{D}=\{ \D_i \}_{i\in[n^{nR}]}$
with
\begin{align}
    \bigcup_{i=1}^{n^{nR}} \D_i \subset \mathbb{R}^n \;.\,
\end{align}

The error probabilities of the identification code $(\U,\mathscr{D})$ are given by
\begin{align}
 P_{e,1}(i)&= 1-\int_{\D_i} f_{\mathbf{Z}}(\mathbf{y}-\mathbf{u}_i)\, d\mathbf{y}
  \;,\,
 \label{Eq.GTypeIErrorDef}
 \\
 P_{e,2}(i,j)&= \int_{\D_j} f_{\mathbf{Z}}(\mathbf{y}-\mathbf{u}_i) \, d\mathbf{y} &&\hspace{-2cm} 
 \;.\,
 \label{Eq.GTypeIIErrorDef}
\end{align}
with the probability density function of the noise
\begin{align}
   f_{\mathbf{Z}}(\mathbf{z})=\frac{1}{(2\pi\sigma^2)^{n/2}}
e^{-\norm{\mathbf{z}}_2^2/2\sigma^2} \;\, .
\end{align}
A $(n^{nR},n,\lambda_1,\lambda_2)$ Gaussian DI code satisfies
\begin{align}
    \label{Eq.GTypeIError}
    P_{e,1}(i) &\leq \lambda_1 \;,\,
    \\
    \label{Eq.GTypeIIError}
    P_{e,2}(i,j) &\leq \lambda_2 \;,\,
\end{align}
for all $i,j\in [n^{nR}]$, such that
$i\neq j$.

A rate $R>0$ is called achievable if for every
$\lambda_1,\lambda_2>0$ and sufficiently large $n$, there exists a $(n^{nR},n,\lambda_1,\lambda_2)$ DI code. 
The operational DI capacity of the Gaussian channel is defined as the supremum of achievable rates, and will be denoted by $C_{DI}(\mathcal{G})$. 
\end{definition}
We emphasize that we don't require decoding regions $\D_i$ to be disjoint, not in this definition or any other definition from the paper.

The DI code is used as depicted in \cref{fig:comm_sys}. Given an identity $i\in [n^{nR}]$ the encoder transmits codeword $\u_i$. The receiver with identity $j$ accepts the identity if and only if the output $\y\in D_j$. So, the missed-identification error or type I error corresponds to the case, when $i=j$, but the identity is rejected by the receiver. The probability of such error is given by \eqref{Eq.GTypeIErrorDef}. The false identification error or type II error corresponds to the case when the identity $i$ is accepted by the receiver expecting the identity $j$. The probability of such error is defined in \eqref{Eq.GTypeIIErrorDef}.

\begin{figure}[t]
\setlength\fwidth{0.5\linewidth}
    \centering
    {  \begin{tikzpicture}[font=\footnotesize, >=latex]
    \node[input] (M) {};
    \node[block, align=center, right=1.8em of M]
        (id-layer-1)
        {\scriptsize DI\\ \scriptsize encoder};
    \node[block, right=1.8em of id-layer-1, align=center]
        (ch-bob)
        {\scriptsize Channel};
    \node[block, align=center, right=1.8em of ch-bob]
        (id-layer-2)
        {\scriptsize DI\\ \scriptsize verifier};   
    \node[output, right=1.8em of id-layer-2] (M-hat) {};

    \draw[->] (M)
        -- node[xshift=-2.2em, align=center]   {Transmitter's\\identity, $i$} (id-layer-1);
    \draw[->] (id-layer-1)
        -- coordinate (mid) node[above] {$\u_i$} (ch-bob);
    \draw[->] (ch-bob)
        -- node[above, xshift=-0.3em]   {$y$} (id-layer-2);
    \draw[->] (id-layer-2)
        -- node[xshift=2.2em, align=center]    {\{$\accept$, \\$\reject$\}} (M-hat);    
    \draw[<-] (id-layer-2)
        -- ++(0,-0.8) node [below, align=center] {\scriptsize Receiver's identity, $j$};
    
    \node[draw, rectangle, densely dashdotted, inner ysep=0.5em, yshift=-0.1em, inner xsep=0.4em, xshift=-0.0em,
    fit=(id-layer-1)] (alice) {};
    \node[draw, rectangle, densely dashdotted, inner xsep=0.4em,
    inner ysep=0.5em, yshift=-0.1em, xshift=-0.0em,
    fit=(id-layer-2)] (bob) {};
      
    \node[anchor=south] at (alice.north) {Transmitter};
    \node[anchor=south] at   (bob.north) {Receiver};

  \end{tikzpicture}}
    \caption{Channel model for deterministic identification.}
    \label{fig:comm_sys}
    \vspace*{-1.em}
\end{figure}
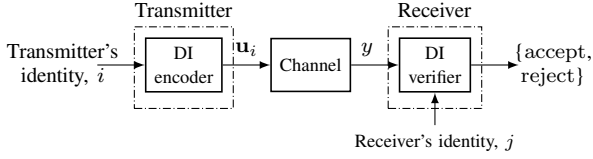

\subsection{Fading Channels with Channel Side Information}
Below we give the definitions of slow and fast fading channels with channel side information.
In this paper side information means the knowledge of fading coefficients at the receiver.
\begin{definition}
        A $(L_{DI}(n, R)=n^{nR}, n)$ DI code for a slow  fading channel $\mathcal{G}_{slow}$ with channel side information (CSI) under input constraints $A$, assuming $n^{nR}$ is an integer, is defined as a system $(\mathcal{U}, \mathcal{D})$ that consists of a codebook $\mathcal{U}=\{\u_i\}_{i\in[n^{nR}]}$, $\mathcal{U}\subset \mathbb{R}^n$, such that
\begin{equation}
    \|\u_i\|_2^2\leq An, \text{ for all } i\in [n^{nR}],
\end{equation}
and a collection of decoding regions $\mathcal{D}=\{D_{i, h}\}_{i\in [n^{nR}], h\in\mathbb{R}}$, $D_{i, h}\subset \mathbb{R}^n$. The error probabilities are given by
\begin{align}
    P_{e, 1}(i, h)&=\int\limits_{D^c_{i, h}}f_{\mathbf{z}}(\mathbf{y} - h \cdot \u_i) d\mathbf{y},\\
    P_{e, 2}(i, j, h)&=\int\limits_{D_{j, h}}f_{\mathbf{z}}(\mathbf{y} - h\cdot \u_i) d\mathbf{y}.
\end{align}

A $(n^{nR}, n, \lambda_1, \lambda_2, \eta)$ DI code satisfies
\begin{align}
    \sup\limits_{h\in H_0}P_{e, 1}(i, h)&\leq \lambda_1,\\
    \sup\limits_{h\in H_0}P_{e, 2}(i, j, h)&\leq \lambda_2,\\
    \mathbb{P}(h\in H_0)&\geq 1-\eta
\end{align}
for some set $H_0\subset \mathbb{R}$ and for all $i, j \in [n^{nR}]$, $i\ne j$. The number $\eta$ is called an outage probability.
\end{definition}

\begin{definition}
        A $(L_{DI}(n, R)=n^{nR}, n)$ DI code for a fast fading channel $\mathcal{G}_{fast}$ with channel side information (CSI) under input constraints $A$, assuming $n^{nR}$ is an integer, is defined as a system $(\mathcal{U}, \mathcal{D})$ that consists of a codebook $\mathcal{U}=\{\u_i\}_{i\in[n^{nR}]}$, $\mathcal{U}\subset \mathbb{R}^n$, such that
\begin{equation}
    \|\u_i\|^2\leq An, \text{ for all } i\in [n^{nR}],
\end{equation}
and a collection of decoding regions $\mathcal{D}=\{D_{i, \mathbf{h}}\}_{i\in [n^{nR}], \mathbf{h}\in\mathbb{R}^n}$, $D_{i, \mathbf{h}}\subset \mathbb{R}^n$. The error probabilities are given by
\begin{align}
    P_{e, 1}(i)&=\int\limits_{\mathbb{R}^n}\int\limits_{D^c_{i, \mathbf{h}}}f_{\mathbf{z}}(\mathbf{y} - \text{diag}(\mathbf{h}) \u_i) d\mathbf{y}dP_{\mathbf{h}},\\
    P_{e, 2}(i, j)&=\int\limits_{\mathbb{R}^n}\int\limits_{D_{j, \mathbf{h}}}f_{\mathbf{z}}(\mathbf{y} - \text{diag}(\mathbf{h}) \u_i) d\mathbf{y}dP_{\mathbf{h}},
\end{align}
where $P_{\mathbf{h}}$ is a probability distribution of a vector of RV $\mathbf{h}=(h_1, \ldots, h_n)$.
 
A $(n^{nR}, n, \lambda_1, \lambda_2)$ DI code satisfies
\begin{align}
    P_{e, 1}(i)&\leq \lambda_1,\\
    P_{e, 2}(i, j)&\leq \lambda_2,
\end{align}
for all $i, j \in [n^{nR}]$, $i\ne j$.
\end{definition}

A rate $R>0$ of DI code for the slow  fading channel is called achievable for the outage probability $\eta$ if for any $\lambda_1, \lambda_2>0$ and sufficiently large $n$ there exists a $(n^{Rn}, n, \lambda_1, \lambda_2, \eta)$ DI code. 
A rate $R>0$ of DI code for the fast  fading channel is called achievable if for any $\lambda_1, \lambda_2>0$ and sufficiently large $n$ there exists a $(n^{Rn}, n, \lambda_1, \lambda_2)$ DI code. 

Capacities are defined as supremums of corresponding rates and denoted as $C^{CSI}_{DI}(\mathcal{G}_{slow}, \eta)$ and $C_{DI}^{CSI}(\mathcal{G}_{fast})$.

\subsection{Discussion about alternative definitions of the capacity of the slow fading channel}\label{subsec::discussion on definitions}

In the paper\cite{salariseddigh2021deterministic} the authors assume that $h$ is a continuous random variable with probability density function $f_h(x)$ and finite moments, and $h$ belongs to bounded set $H$ with probability $1$. They consider the following definition of DI codes for the slow fading channel.

\begin{definition}[\cite{salariseddigh2021deterministic}]
        A $(L_{DI}(n, R)=n^{nR}, n)$ DI-sup code for a slow fading channel $\mathcal{G}_{slow}$ with channel side information (CSI) under input constraints $A$, assuming $n^{nR}$ is an integer, is defined as a system $(\mathcal{U}, \mathcal{D})$ that consists of a codebook $\mathcal{U}=\{\u_i\}_{i\in[n^{nR}]}$, $\mathcal{U}\subset \mathbb{R}^n$, such that
\begin{equation}
    \|\u_i\|_2^2\leq An, \text{ for all } i\in [n^{nR}],
\end{equation}
and a collection of decoding regions $\mathcal{D}=\{D_{i, h}\}_{i\in [n^{nR}], h\in\mathcal{H}}$, $D_{i, h}\subset \mathbb{R}^n$. The error probabilities are given by
\begin{align}
    P_{e, 1}(i, h)&=\int\limits_{D^c_{i, h}}f_{\mathbf{z}}(\mathbf{y} - h \cdot \u_i) d\mathbf{y},\\
    P_{e, 2}(i, j, h)&=\int\limits_{D_{j, h}}f_{\mathbf{z}}(\mathbf{y} - h\cdot \u_i) d\mathbf{y}.
\end{align}

A $(n^{nR}, n, \lambda_1, \lambda_2)$ DI-sup code satisfies
\begin{align}
    \sup\limits_{h\in H}P_{e, 1}(i, h)&\leq \lambda_1,\\
    \sup\limits_{h\in H}P_{e, 2}(i, j, h)&\leq \lambda_2,
\end{align}
for all $i, j \in [n^{nR}]$, $i\ne j$.
\end{definition}

Another way to define DI code for the slow fading channel is to consider average errors over $h$ like in the case of the fast fading.

\begin{definition}
        A $(L_{DI}(n, R)=n^{nR}, n)$ DI-av code for a slow fading channel $\mathcal{G}_{slow}$ with channel side information (CSI) under input constraints $A$, assuming $n^{nR}$ is an integer, is defined as a system $(\mathcal{U}, \mathcal{D})$ that consists of a codebook $\mathcal{U}=\{\u_i\}_{i\in[n^{nR}]}$, $\mathcal{U}\subset \mathbb{R}^n$, such that
\begin{equation}
    \|\u_i\|_2^2\leq An, \text{ for all } i\in [n^{nR}],
\end{equation}
and a collection of decoding regions $\mathcal{D}=\{D_{i, h}\}_{i\in [n^{nR}], h\in\mathbb{R}}$, $D_{i, h}\subset \mathbb{R}^n$. The error probabilities are given by
\begin{align}
    P_{e, 1}(i)&=\int\limits_{\mathbb{R}}\int\limits_{D^c_{i, h}}f_{\mathbf{z}}(\mathbf{y} - h \cdot \u_i) d\mathbf{y}dP_h,\\
    P_{e, 2}(i, j)&=\int\limits_{\mathbb{R}}\int\limits_{D_{j, h}}f_{\mathbf{z}}(\mathbf{y} - h\cdot \u_i) d\mathbf{y}dP_h.
\end{align}

A $(n^{nR}, n, \lambda_1, \lambda_2)$ DI-av code satisfies
\begin{align}
    P_{e, 1}(i)&\leq \lambda_1,\\
    P_{e, 2}(i, j)&\leq \lambda_2,
\end{align}
for all $i, j \in [n^{nR}]$, $i\ne j$.
\end{definition} 

Define achievable rates and capacities for these two definitions in a standard way. Denote the capacities as $C^{CSI}_{DI-sup}(\mathcal{G}_{slow})$ and $C_{DI-av}^{CSI}(\mathcal{G}_{slow})$.

This definition of DI-sup codes considers the supremum over the set of all values $H$, whereas we look at the supremum over some set $H_0$ such that $\mathbb{P}(h\in H_0)\geq 1-\eta$. It is very similar to our definition for $\eta=0$, although even in this case we can exclude from $H$ some subset of probability $0$. It is obvious that a $(n^{nR}, n, \lambda_1, \lambda_2)$ DI-sup code is also a $(n^{nR}, n, \lambda_1, \lambda_2, \eta)$ DI code for any $\eta$ and $(n^{nR}, n, \lambda_1, \lambda_2)$ DI-av code. It is also easy to see that any $(n^{nR}, n, \lambda_1, \lambda_2, \eta)$ DI code is a $(n^{nR}, n, \lambda_1+\eta, \lambda_2 +\eta)$ DI-av code. So, if we prove some lower bound on capacity $C^{CSI}_{DI}(\mathcal{G}_{slow}, \eta)$ for arbitrary $\eta$, the same lower bound is true for $C_{DI-av}^{CSI}(\mathcal{G}_{slow})$.

The disadvantage of the definition of DI-sup code is that it depends only on the support set $h$ of random variable $h$, but not on the distribution of $h$. It seems that the requirements in this definition are too strict. The definition we propose makes use of the additional parameter $\eta$ and therefore is more flexible.

The definition of DI-av codes seems to be not very adequate from the practical point of view. For the slow fading the state of the channel, i.e., realization of the random variable $h$, can be the same for long periods. Therefore, the average error is not a good predictor of the actual behavior of the channel.

\subsection{Fast Fading Channel without Side Information}

\begin{definition}
        A $(L_{DI}(n, R)=n^{nR}, n)$ DI code for a fast fading channel $\mathcal{G}_{fast}$ under input constraints $A$, assuming $n^{nR}$ is an integer, is defined as a system $(\mathcal{U}, \mathcal{D})$ that consists of a codebook $\mathcal{U}=\{u_i\}_{i\in[n^{nR}]}$, $\mathcal{U}\subset \mathbb{R}^n$, such that
\begin{equation}
    \norm{\u_i}^2\leq An, \text{ for all } i\in [n^{nR}],
\end{equation}
and a collection of decoding regions $\mathcal{D}=\{D_{i}\}_{i\in [n^{nR}]}$, $D_{i}\subset \mathbb{R}^n$. The error probabilities are given by
\begin{align}
    P_{e, 1}(i)&=\int\limits_{\mathbb{R}^n}\int\limits_{D^c_{i }}f_{\mathbf{z}}(\mathbf{y} - \text{diag}(\mathbf{h}) \u_i) d\mathbf{y}dP_{\mathbf{h}},\\
    P_{e, 2}(i, j)&=\int\limits_{\mathbb{R}^n}\int\limits_{D_{j}}f_{\mathbf{z}}(\mathbf{y} - \text{diag}(\mathbf{h}) \u_i) d\mathbf{y}dP_{\mathbf{h}},
\end{align}
where $P_{\mathbf{h}}$ is a probability distribution of a vector of RV $\mathbf{h}=(h_1, \ldots, h_n)$.
 
A $(n^{nR}, n, \lambda_1, \lambda_2)$ DI code satisfies
\begin{align}
    P_{e, 1}(i)&\leq \lambda_1,\\
    P_{e, 2}(i, j)&\leq \lambda_2,
\end{align}
for all $i, j \in [n^{nR}]$, $i\ne j$.
\end{definition}

Note that in these definitions decoding regions do not depend on $\mathbf{h}$.
 
A rate $R>0$ of DI code for the slow or fast fading channel without side information is called achievable if for any $\lambda_1, \lambda_2>0$ and sufficiently large $n$ the exists a corresponding $(n^{Rn}, n, \lambda_1, \lambda_2, \eta)$ or $(n^{Rn}, n, \lambda_1, \lambda_2)$ DI code. 

Capacity is defined as the supremum of achievable rates and denoted as $C_{DI}(\mathcal{G}_{fast})$.

\section{New Results and Discussion}\label{sec::new results}

\subsection{Code Construction}
 We provide a code construction with the following properties.
\begin{theorem}\label{th::construction}
     For any positive constants $a$, $A$, $B$, $a<1/8$, and any $n>n_0(a)$ there exists a code $\mathcal{U}(n)=\{\u_i\}_{i\in[M]}$, $\u_i\in \mathbb{R}^n$ of size $M\geq 2^{\left(\frac{1}{4}-2a\right)n\log_2n}$
    with the following properties.
    \begin{enumerate}
        \item  $\norm{\u_i}_2^2\leq An$ for all $i$.
        \vspace{1mm}
        \item  $\norm{\u_i}_4^4\leq Bn$ for all $i$.
        \vspace{1mm}
        \item $\norm{\u_i-\u_j}_2\geq n^{1/4+a}$ for all $i\neq j$.
    \end{enumerate}

    Moreover, such code can be efficiently constructed and encoded, i.e., the time complexity of construction and encoding are polynomial.
\end{theorem}

A code with such properties can be used as DI code in Theorem~\ref{th::slow fading} and in the third claim of Theorem~\ref{th::fast fading}. Obviously, the lower bound for the Gaussian AWGN channel is also achieved with such codes.
The code construction is based on the concatenation of two Reed-Solomon codes and can be easily implemented in practice.

\subsection{Upper Bound for the Gaussian AWGN channel}
For the capacity of Gaussian AWGN channel $\mathcal{G}$ with power constraints lower and upper bounds have been proved in~\cite{salariseddigh2021deterministic}
\begin{equation}
    \frac{1}{4}\leq C_{DI}(\mathcal{G})\leq 1.
\end{equation}
Note that the bounds do not depend on $A$. The bounds for the capacities of fading channels also would not depend on $A$. However, for finite length the parameter $A$ is important.
We refine the upper bound and prove the following.

\begin{theorem}\label{th::upper bound for gaussian channel}
    For a Gaussian AWGN channel $\mathcal{G}$ with power constraints 
    \begin{equation}
    C_{DI}(\mathcal{G})\leq \frac{1}{2}.
    \end{equation}
\end{theorem}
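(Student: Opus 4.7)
The plan is to combine two classical ingredients: a lower bound on the pairwise distance between codewords of any reliable DI code, and a volume/packing argument in the $\sqrt{nA}$-ball. The key observation is that the previous upper bound of $1$ in \cite{salariseddigh2021deterministic} presumably follows from a suboptimal use of these ingredients, and tightening the packing argument should yield the $1/2$ bound.

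First, I would show that for any $(n^{nR},n,\lambda_1,\lambda_2)$ DI code with $\lambda_1+\lambda_2<1$, every pair of distinct codewords $\u_i,\u_j$ satisfies
\begin{equation}
\norm{\u_i-\u_j}_2 \;\ge\; d_0 \;:=\; 2\sigma(1-\lambda_1-\lambda_2).
\end{equation}
The argument is the standard one: from $P_{e,1}(j)\le\lambda_1$ and $P_{e,2}(i,j)\le\lambda_2$ we get
\begin{equation}
1-\lambda_1-\lambda_2 \;\le\; \int_{\D_j}\bigl[f_{\mathbf Z}(\y-\u_j)-f_{\mathbf Z}(\y-\u_i)\bigr]\,d\y,
\end{equation}
and the right-hand side is bounded by the total variation distance between $N(\u_j,\sigma^2 I)$ and $N(\u_i,\sigma^2 I)$. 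Pinsker's inequality (or a direct Hellinger computation) gives $\mathrm{TV}\le \norm{\u_i-\u_j}_2/(2\sigma)$, yielding the claimed separation.

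Second, since all codewords lie in the ball $B(0,\sqrt{nA})\subset\RR^n$ and are pairwise at distance at least $d_0$, the open balls $B(\u_i,d_0/2)$ are disjoint and all contained in $B(0,\sqrt{nA}+d_0/2)$. Comparing Lebesgue volumes,
\begin{equation}
n^{nR} \;\le\; \left(\frac{\sqrt{nA}+d_0/2}{d_0/2}\right)^{\!n} \;=\; \left(1+\frac{2\sqrt{nA}}{d_0}\right)^{\!n}.
\end{equation}
Taking logarithms and dividing by $n\log n$,
\begin{equation}
R \;\le\; \frac{1}{\log n}\log\!\left(1+\frac{2\sqrt{nA}}{d_0}\right) \;=\; \frac{1}{2} + O\!\left(\frac{1}{\log n}\right),
\end{equation}
so $R\le 1/2+\varepsilon$ for any $\varepsilon>0$ and sufficiently large $n$, establishing $C_{DI}(\mathcal G)\le 1/2$.

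No step should be genuinely hard; the only care needed is in the constants. The subtle point—and the place where one might worry—is the pairwise separation $d_0$: the bound is only a constant independent of $n$, which is precisely what is required to match the super-exponential codebook size $n^{nR}$ through the $\sqrt{nA}$ scaling of the radius. Using a sharper TV/Hellinger estimate in the first step would change $d_0$ but not the leading $1/2$ factor, since the latter comes entirely from the $\sqrt{n}$ scaling of the radius versus the constant minimum distance.
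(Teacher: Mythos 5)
Your proposal is correct and follows essentially the same route as the paper's proof: a constant lower bound on the pairwise codeword distance obtained from the statistical distance between the two shifted Gaussians (you use Pinsker, the paper integrates the positive part of the density difference directly, but both yield a distance bounded away from zero for fixed $\lambda_1+\lambda_2<1$), followed by the identical sphere-packing volume comparison in the ball of radius $\sqrt{nA}$. Your closing remark correctly identifies the only point that matters: the $1/2$ comes from the $\sqrt{n}$ radius against a constant minimum distance, so the precise constant $d_0$ is irrelevant.
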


\begin{remark}
    This result was also independently obtained by J. Huffmann, though it remains unpublished (Personal communication).
\end{remark}

This result can be extended to the case of the slow fading and the fast fading channels.

\subsection{Slow Fading Channel with CSI}

In paper~\cite{salariseddigh2021deterministic} the authors proved the following theorem 

\begin{theorem}\label{th::slow fading Mohammad}
    If $0 \in cl(H)$ then 
    \begin{equation}
        C^{CSI}_{DI-sup}(\mathcal{G}_{slow})=0.
    \end{equation}
    If $0\notin cl(H)$ then
    \begin{equation}
        \frac{1}{4}\leq C^{CSI}_{DI-sup}(\mathcal{G}_{slow})\leq 1.
    \end{equation}    
\end{theorem}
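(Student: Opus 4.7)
The plan breaks into three parts following the structure of the theorem.

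First, for the converse when $0\in\mathrm{cl}(H)$, I would rely on the standard hypothesis-testing lower bound for identification. For any two distinct codewords $\u_i,\u_j$, any $h$, and any decoding set $D_{j,h}$, setting $P_{i,h}=N(h\u_i,\sigma^2 I_n)$ and $P_{j,h}=N(h\u_j,\sigma^2 I_n)$, one has
\begin{equation*}
P_{e,1}(j,h)+P_{e,2}(i,j,h)\;\geq\; 1-\norm{P_{i,h}-P_{j,h}}_{\mathrm{TV}}\;=\;2\left(1-\Phi\!\left(\frac{|h|\,\norm{\u_i-\u_j}_2}{2\sigma}\right)\right),
\end{equation*}
using the closed-form total variation between two isotropic Gaussians with a common covariance. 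The power constraint forces $\norm{\u_i-\u_j}_2\leq 2\sqrt{nA}$, so requiring both error probabilities below a common threshold $\lambda$ yields $|h|\geq \sigma\,\Phi^{-1}(1-\lambda)/\sqrt{nA}$. If $0\in\mathrm{cl}(H)$, for each $n$ one can select $h_n\in H$ violating this inequality; no two codewords are then distinguishable at $h_n$, so $M\leq 1$ and $R=0$.

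Second, for the lower bound $C\geq 1/4$ when $0\notin\mathrm{cl}(H)$, choose $h_{\min}>0$ such that $|h|\geq h_{\min}$ for every $h\in H$. Because the receiver knows $h$, it may rescale $\y/h=\u_i+\mathbf{z}/h$ with $\mathbf{z}/h\sim N(0,(\sigma/h)^2 I_n)$, which turns the problem into identification on a Gaussian AWGN channel whose effective noise variance is at most $\sigma^2/h_{\min}^2$. I would take the Gaussian AWGN code achieving rate $1/4$ from~\cite{salariseddigh2021deterministic} designed for the worst-case noise $\sigma^2/h_{\min}^2$ and power $A$; the same codebook is retained for every $h$, while the $h$-parameterized decoder $D_{i,h}$ is obtained by scaling the Gaussian decoder by $|h|$. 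Because the actual effective noise is never larger than the designed one, the type-I guarantee transfers by monotonicity of ball-probabilities in the noise scale, and the type-II guarantee likewise transfers because the decoding balls around distinct codewords remain disjoint from the noise concentration region, yielding a $(n^{n/4},n,\lambda_1,\lambda_2)$ DI-sup code for all sufficiently large $n$.

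Third, the upper bound $C\leq 1$ follows by fixing a single $h^{\star}\in H$ and noting that a DI-sup code must in particular satisfy the error constraints at $h^{\star}$. Since $h^{\star}\neq 0$, the restriction is a Gaussian AWGN DI code with scaled codewords $h^{\star}\u_i$ under the power constraint $(h^{\star})^2 A$, so the AWGN upper bound of $1$ proved in~\cite{salariseddigh2021deterministic} transfers directly.

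The hard part, I expect, is the lower-bound direction: one has to verify that a single codebook together with the $h$-parameterized family of decoding regions achieves the claimed error bounds \emph{simultaneously} for every $h\in H$, without shrinking the admissible code size below the Gaussian baseline of $n^{n/4}$. The two converse arguments are comparatively clean reductions once the Gaussian hypothesis-testing bound and the Gaussian AWGN capacity bound are in hand.
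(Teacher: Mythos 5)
First, a remark on provenance: the paper does not prove this statement itself — it is quoted from~\cite{salariseddigh2021deterministic} — so the relevant comparison is with the paper's proof of its own refinement (Theorem~\ref{th::slow fading}), which uses the same three ingredients you identify. Your two converse arguments are sound. The reduction to a single realization $h^{\star}$ for the upper bound is exactly the paper's argument. Your total-variation argument for $0\in\mathrm{cl}(H)$ is in fact sharper than what the paper writes for its own theorem (which only treats $h=0$ itself, where the two output distributions coincide and $\lambda_1+\lambda_2\geq 1$ trivially); a quantitative bound of your type is genuinely needed to handle $0\in\mathrm{cl}(H)\setminus H$, and yours is correct.

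The achievability step, however, has a real gap. In your rescaled domain the effective noise is $N(0,(\sigma/h)^2I_n)$, and you fix the decoding radius $r_n=\sqrt{n\sigma^2/h_{\min}^2+\delta_n}$ for the worst case. Monotonicity does give the type-I bound for all $|h|\geq h_{\min}$, but the type-II error moves the \emph{wrong} way: when $\u_i$ is sent, $\norm{\u_i+\mathbf{z}/h-\u_j}_2^2$ concentrates around $n\sigma^2/h^2+\norm{\u_i-\u_j}_2^2$, and the codewords of a rate-$1/4$ code are only $n^{1/4+a}$ apart, so $\norm{\u_i-\u_j}_2^2=O(n^{1/2+2a})=o(n)$. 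Hence for any $h$ with $|h|^2\geq(1+\varepsilon)h_{\min}^2$ the radius surplus $n\sigma^2(1/h_{\min}^2-1/h^2)=\Theta(n)$ dwarfs the separation term, the received vector lands deep inside the ball of radius $r_n$ around $\u_j$, and $P_{e,2}(i,j,h)\to1$; the supremum over $h\in H$ does not vanish. The underlying misconception is in the phrase ``the decoding balls around distinct codewords remain disjoint from the noise concentration region'': even in the pure AWGN case the balls have radius $\Theta(\sqrt{n})$ while the codewords are $n^{1/4+a}=o(\sqrt n)$ apart, so the balls overlap almost entirely; the type-II guarantee is a thin-shell (Chebyshev) statement in which $\norm{\u_i-\u_j}_2^2$ need only beat the $O(\sqrt n)$ fluctuation of $\norm{\mathbf{z}}_2^2$ — which is precisely why a $\Theta(n)$ mismatch in the squared radius is fatal. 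The fix is what the paper does for Theorem~\ref{th::slow fading}: keep the radius equal to $\sqrt{n\sigma^2+\delta_n}$ for \emph{every} $h$ (the additive noise has variance $\sigma^2$ regardless of the fading), center the ball at $h\u_i$ (scale by $h$, not $|h|$), and note that the scaled codebook has minimum distance at least $h_{\min}n^{1/4+a}\geq n^{1/4+a/2}$, so the Gaussian decoding-region construction (Proposition~\ref{pr::packing implies code}) applies uniformly over $h\in H$.
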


We are going to prove the following.
\begin{theorem}\label{th::slow fading}
    1. For any $\eta$, $\eta<\mathbb{P}(h=0)$, the capacity of the slow fading channel is 0, i.e.,
    \begin{equation}
        C_{DI}^{CSI}(\mathcal{G}_{slow}, \eta)=0.
    \end{equation}
    2. For any $\eta$, $ \mathbb{P}(h=0)<\eta<1$, the capacity of the slow fading channel satisfies the inequality
    \begin{equation}
        \frac{1}{4}\leq C_{DI}^{CSI}(\mathcal{G}_{slow}, \eta)\leq \frac{1}{2}.
    \end{equation}
    Moreover, codes that achieve this lower bound can be efficiently constructed and encoded, i.e., the time complexity of construction and encoding are polynomial.
\end{theorem}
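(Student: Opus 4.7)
The plan is to handle the two statements separately, and in both cases the argument pivots on whether the admissible set $H_0$ must contain the atom at zero. Throughout I will use $\mu$ for the $n$-dimensional Gaussian noise measure with covariance $\sigma^2 I$.

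For Part 1, I first observe that when $\eta<\mathbb{P}(h=0)$, every $H_0\subseteq \mathcal{H}$ with $\mathbb{P}(h\in H_0)\ge 1-\eta$ necessarily contains $0$; otherwise $\mathbb{P}(h\in H_0)\le 1-\mathbb{P}(h=0)<1-\eta$. Consequently the two sup-conditions in the definition force $P_{e,1}(i,0)\le \lambda_1$ and $P_{e,2}(i,j,0)\le \lambda_2$ for all $i\neq j$. When $h=0$ the output collapses to pure noise, so $P_{e,1}(i,0)=1-\mu(D_{i,0})$ and $P_{e,2}(i,j,0)=\mu(D_{j,0})$. Applied to a single region $D_{j,0}$ this yields $1-\lambda_1\le \mu(D_{j,0})\le \lambda_2$, i.e.\ $\lambda_1+\lambda_2\ge 1$. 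Hence whenever $\lambda_1+\lambda_2<1$ no code with $M\ge 2$ exists, and no positive rate is achievable.

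For Part 2, the upper bound follows from Theorem~\ref{th::upper bound for gaussian channel}. For a valid code with $\lambda_1+\lambda_2<1$, the Part 1 argument shows that its $H_0$ cannot contain $0$, so any $h^*\in H_0$ is nonzero; restricting to the single value $h=h^*$ yields a Gaussian DI code with codewords $\{h^*\u_i\}$ satisfying the power constraint $(h^*)^2 A$ and decoding regions $\{D_{i,h^*}\}$, whose rate is bounded by $1/2$. For the lower bound, since $\mathbb{P}(h=0)<\eta$ I use continuity of measure from above to pick $h_0>0$ with $\mathbb{P}(|h|<h_0)<\eta$, and take $H_0:=\{h\in\mathcal{H}:|h|\ge h_0\}$. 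For $h\in H_0$ the bijection $\mathbf{y}\mapsto \mathbf{y}/h$ turns the conditional channel into an AWGN channel with noise variance at most $\sigma^2/h_0^2$; feeding this into the AWGN DI construction of~\cite{salariseddigh2021deterministic} (which achieves rate $1/4$ at any finite SNR) and setting $D_{i,h}:=h\cdot D_i$ transfers the Gaussian code to the slow-fading setting and achieves rate $1/4$.

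The main technical point is extracting uniformity of error control over $h\in H_0$ in the lower-bound construction: the AWGN error analysis must deliver a bound that, after the substitution $\mathbf{y}\mapsto \mathbf{y}/h$, depends only on the worst-case effective noise variance $\sigma^2/h_0^2$ rather than on the specific realization of $h$. Once this is made explicit, the construction with $D_{i,h}=h\cdot D_i$ works verbatim, and both bounds of Part 2 are reductions to the Gaussian upper bound (Theorem~\ref{th::upper bound for gaussian channel}) and to the Gaussian lower bound of $1/4$ from~\cite{salariseddigh2021deterministic}.
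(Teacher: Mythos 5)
Your proposal is correct and follows essentially the same route as the paper: Part 1 via the forced inclusion of $h=0$ in $H_0$ and the resulting $\lambda_1+\lambda_2\ge 1$, the upper bound by freezing a nonzero $h^*\in H_0$ and invoking Theorem~\ref{th::upper bound for gaussian channel}, and the lower bound by thresholding $|h|\ge h_0$ with $\mathbb{P}(|h|<h_0)<\eta$ and reusing the AWGN packing construction. The only cosmetic difference is that you rescale the channel output by $1/h$ (worst-case noise variance $\sigma^2/h_0^2$) where the paper rescales the codewords by $h$ (worst-case minimum distance $h_0\, n^{1/4+a}$); these are equivalent and your stated uniformity concern is resolved the same way in both formulations.
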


The improvement of the upper bound from $1$ to $1/2$ is based on  Theorem~\ref{th::upper bound for gaussian channel}, and can be done for any of the alternative definitions of the capacity mentioned in Section~\ref{sec::notations and defs}.

As for the lower bound, we note that our result is constructive, while Theorem~\ref{th::slow fading Mohammad} was proved with probabilistic method. Additionally, in the case when $0\in cl(H)$ we can give some meaningful bounds for some $\eta$, whereas for the definition from~\cite{salariseddigh2021deterministic} it can only be said that the capacity is zero. It is an interesting and practically relevant improvement, since for many fading distributions, used in practice, $0$ belongs to the closure of $H$. In particular, this is true for Rayleigh, Rician, and Nakagami distributions, which are used to describe the distribution of $|h|$. However, the motivation for the usage of these distributions comes from the model with $h\in \mathbb{C}$, for which our results can not be applied directly.

For the capacity with average error $C^{CSI}_{DI-av}(\mathcal{G}_{slow})$ the lower bound $1/4$ can be proved for the case $\mathbb{P}(h=0)=0$. In case if $\mathbb{P}(h=0)>0$ the capacity $C^{CSI}_{DI-av}(\mathcal{G}_{slow})=0$.

Recall~\cite{tse2005fundamentals} that the transmission $\varepsilon$-outage capacity $C_{\varepsilon}$ for the slow fading channel equals
\begin{equation}
    C_{\varepsilon} = \log(1+F^{-1}(1-\varepsilon)\text{SNR}),
\end{equation}
where $F(x)=\mathbb{P}(|h|^2>x)$. A similar result was obtained in~\cite{ezzine2023message} for MIMO slow fading channels.
We emphasize that $C_{\varepsilon}$ depends both on power constraints and on the distribution of fading coefficients, whereas the identification capacity does not depend on power constraints at all, and the dependence in the distribution of $h$ is limited to the fact if $\mathbb{P}(h=0)$ is smaller or greater than $\eta$. We also emphasize that the scaling for the identification is different.
As clearly mentioned in the introduction, ultra-reliability, i.e., maintaining very low probabilities of outage, is one of the central requirements for 6G networks. This fact signifies the importance of our result that deterministic identification allows us to achieve better performance with respect to outage capacity.

\subsection{Fast Fading Channel with CSI}

In paper~\cite{salariseddigh2021deterministic} the following result is proved.

\begin{theorem}\label{th::fast fading Mohammad}
    Assume that the fading coefficients are positive and bounded away from zero, i.e., $0\notin cl(H)$, $\mathbb{P}(h \in H) = 1$ for some $H$. Moreover, assume that $h$ is an absolutely continuous random variable with finite moments. Then
    \begin{equation}
        1/4\leq C_{DI}^{CSI}(\mathcal{G}_{fast})\leq 1.
    \end{equation}
\end{theorem}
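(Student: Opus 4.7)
The plan is to transport the AWGN lower-bound construction to the fast fading setting by using the CSI at the receiver, and to obtain the upper bound $1$ via a volume-counting argument on the input ball. Since $0\notin\mathrm{cl}(H)$ and $h$ is absolutely continuous, there is $h_{\min}>0$ such that $|h_t|\geq h_{\min}$ almost surely for every coordinate $t$; this uniform multiplicative lower bound is the only new ingredient beyond the AWGN analysis.

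For the lower bound $1/4$, I would place codewords $\{\u_i\}_{i=1}^M$ on the sphere of radius $\sqrt{nA}$ with pairwise $\ell_2$-separation at least $c\,n^{1/4}/h_{\min}$, where $c=c(\lambda_1,\lambda_2)$ is an $n$-independent constant. A greedy sphere-packing argument in the ball of radius $\sqrt{nA}$ delivers such codebooks with $\log M = (n/4)\log n - O(n)$, so $\log M/(n\log n)\to 1/4$. Take the CSI-dependent decoding regions
\[
D_{i,\mathbf{h}} = \left\{\mathbf{y}\in\mathbb{R}^n : \|\mathbf{y} - \mathrm{diag}(\mathbf{h})\u_i\|_2^2 \leq n\sigma^2 + \tau\sqrt{n}\right\},
\]
with $\tau=\tau(\lambda_1)$. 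Type I control is a standard $\chi^2_n$-concentration bound, independent of $\mathbf{h}$. For Type II, condition on $\mathbf{h}$ and set $\mathbf{w}_\mathbf{h} = \mathrm{diag}(\mathbf{h})(\u_i-\u_j)$. The event $\mathbf{y}\in D_{j,\mathbf{h}}$ given $\u_i$ sent, after absorbing the concentration of $\|\mathbf{z}\|_2^2$ around $n\sigma^2$, reduces to a one-sided bound $2\mathbf{w}_\mathbf{h}^{\!\top}\mathbf{z}\leq -\|\mathbf{w}_\mathbf{h}\|_2^2 + O(\sqrt{n})$ on a Gaussian linear form of variance $\sigma^2\|\mathbf{w}_\mathbf{h}\|_2^2 \geq \sigma^2 h_{\min}^2 \|\u_i-\u_j\|_2^2 = \Omega(\sqrt{n})$. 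The resulting Gaussian tail is $\leq\lambda_2$ uniformly in $\mathbf{h}$ and hence survives the expectation over $\mathbf{h}$.

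For the upper bound $1$, I would restrict attention to a typical set $\mathcal{E}\subset\mathcal{H}^n$ of fading vectors with $\mathbb{P}(\mathcal{E})\geq 1-\lambda_1/2$ and $\max_t h_t^2 \leq H_0$ on $\mathcal{E}$; such an $\mathcal{E}$ exists by the finite-moments assumption on $h$. On $\mathcal{E}$ the decoding regions $D_{i,\mathbf{h}}$ must each contain a core $\mathbf{y}$-ball of radius $O(\sqrt{n})$ around $\mathrm{diag}(\mathbf{h})\u_i$, and all such cores lie in a common ball of radius $O(\sqrt{n(AH_0+\sigma^2)})$. The Type II constraint forces these cores to be almost disjoint, and a Euclidean volume comparison then yields $M\leq (C\sqrt{n})^n$, i.e.\ $R\leq 1$.

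The hard step is the Type II analysis under fading. Although conditioning on $\mathbf{h}$ reduces the computation to an AWGN-style Gaussian tail bound, that bound must hold uniformly in $\mathbf{h}$ before one can safely integrate over the distribution of $\mathbf{h}$. The hypothesis $|h_t|\geq h_{\min}$ is precisely what guarantees a uniform lower bound on $\|\mathbf{w}_\mathbf{h}\|_2$, which in turn fixes the variance of the cross-term $\mathbf{w}_\mathbf{h}^{\!\top}\mathbf{z}$ from below. Without it, this variance could collapse on a non-negligible set of $\mathbf{h}$, and the Type II error would blow up upon integration.
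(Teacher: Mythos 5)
First, a point of orientation: Theorem~\ref{th::fast fading Mohammad} is quoted from \cite{salariseddigh2021deterministic} and is not proved in this paper; the nearest in-paper argument is the proof of the strictly more general Theorem~\ref{th::fast fading}. Your lower-bound half is essentially correct and is the natural specialization of that argument to the case $0\notin cl(H)$: packing at separation $\Theta(n^{1/4})$, ball decoding around $\text{diag}(\mathbf{h})\u_i$, $\chi^2$-concentration for type I, and a Gaussian tail for type II, with the a.s.\ bound $|h_t|\ge h_{\min}$ giving $\norm{\text{diag}(\mathbf{h})(\u_i-\u_j)}_2\ge h_{\min}\norm{\u_i-\u_j}_2$ uniformly in $\mathbf{h}$. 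That uniform bound is exactly what replaces the ``few bad coordinates'' device of Proposition~\ref{pr::packing with distance between projections}, which the paper needs only because it drops the hypothesis $0\notin cl(H)$.

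The upper-bound half has a genuine gap. The type~I constraint does not force $D_{i,\mathbf{h}}$ to contain any ball around $\text{diag}(\mathbf{h})\u_i$, let alone one of radius $O(\sqrt{n})$: a set of Gaussian measure $1-\lambda_1$ may exclude an entire neighborhood of the mean. Worse, decoding regions of an identification code overlap heavily by design --- the type~II constraint only bounds the Gaussian measure of $D_{j,\mathbf{h}}$ under the law centered at $\text{diag}(\mathbf{h})\u_i$ --- so no (near-)disjointness of regions or of $O(\sqrt{n})$-radius ``cores'' is available. If it were, your own volume comparison (disjoint $O(\sqrt{n})$-balls inside an $O(\sqrt{n})$-ball) would give $M\le C^n$, hence $R\to 0$, contradicting the lower bound you just proved; note also that $M\le (C\sqrt{n})^n$ corresponds to $R\le 1/2$, not $R\le 1$, in the $n^{nR}$ scaling, so the arithmetic is internally inconsistent. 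What the error constraints actually force apart are the \emph{centers}: small $\lambda_1+\lambda_2$ makes the two output Gaussians nearly mutually singular, which yields $\norm{\text{diag}(\mathbf{h})(\u_i-\u_j)}_2\ge g(\lambda_1+\lambda_2)$ for a constant $g$ (the total-variation argument of the Lemma in Section~\ref{sec::upp_bound_gauss}); one then packs \emph{constant}-radius balls around the centers inside a ball of radius $O(\sqrt{n})$. Finally, since $P_{e,2}$ is an average over $\mathbf{h}$, you must first extract a single realization $\mathbf{h}$ for which both a power bound on $\text{diag}(\mathbf{h})\u_i$ and the error bounds hold simultaneously for a constant fraction of codewords (the Markov-inequality selection in the proof of claim~1 of Theorem~\ref{th::fast fading}); restricting to a typical set of $\mathbf{h}$ does not substitute for this step.
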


We are going to prove a more general theorem.

\begin{theorem}\label{th::fast fading}
    
    1. If $\mathbb{E}h^2 < \infty$ then 
    \begin{equation}
        C_{DI}^{CSI}(\mathcal{G}_{fast})\leq 1/2.
    \end{equation}
    
    2. If $\mathbb{P}(h=0) = 0$ then 
    \begin{equation}
        C_{DI}^{CSI}(\mathcal{G}_{fast})\geq 1/4.
    \end{equation}

    3. If $\mathbb{P}(h=0) < 1$ and $\mathbb{E}h^4<\infty$ then 
    \begin{equation}
        C_{DI}^{CSI}(\mathcal{G}_{fast})\geq 1/4.
    \end{equation}
    Moreover, codes that achieve the lower bound in claim 3 can be efficiently constructed and encoded, i.e., the time complexity of construction and encoding are polynomial.
    
\end{theorem}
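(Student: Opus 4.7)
The plan is to treat the three parts of Theorem~\ref{th::fast fading} separately, leveraging the Gaussian AWGN upper bound of Theorem~\ref{th::upper bound for gaussian channel} and the rate-$1/4$ AWGN codebook underlying the lower bound of~\cite{salariseddigh2021deterministic}. For part~1, the approach is a reduction to Theorem~\ref{th::upper bound for gaussian channel}. Once the fading realization $\mathbf{h}$ is fixed, the channel is Gaussian AWGN with codewords $\widetilde{\u}_i := \text{diag}(\mathbf{h})\u_i$, and these satisfy $\mathbb{E}_{\mathbf{h}}\norm{\widetilde{\u}_i}_2^2 = \mathbb{E}h^2\cdot\norm{\u_i}_2^2 \leq n\mathbb{E}h^2 A$. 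Using Markov's inequality on the effective codeword powers and on the average type~I/II errors, combined with a codebook expurgation step, I would extract a single realization $\mathbf{h}^*$ and a sub-codebook of size at least $M/c$, for a constant $c$, such that $\norm{\widetilde{\u}_i(\mathbf{h}^*)}_2^2 \leq nA'$ for every surviving $i$ and the induced Gaussian AWGN code has type~I and type~II errors below prescribed thresholds. Theorem~\ref{th::upper bound for gaussian channel} then forces $M/c \leq n^{n/2 + o(n)}$, giving $R \leq 1/2$.

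For parts~2 and~3, I would reuse the AWGN codebook of~\cite{salariseddigh2021deterministic}, consisting of $M = n^{n/4}$ codewords $\u_i$ in the ball of radius $\sqrt{nA}$ with pairwise distances $\norm{\u_i - \u_j}_2 \geq n^{1/4+\alpha}$ for some $\alpha > 0$, together with the natural fading-adapted decoding regions $D_{i,\mathbf{h}} = \{\y : \norm{\y - \text{diag}(\mathbf{h})\u_i}_2^2 \leq n(\sigma^2 + \varepsilon)\}$. Type~I error reduces to the standard Gaussian concentration of $\norm{\mathbf{z}}_2^2$ and is independent of $\mathbf{h}$. The crucial step is to show that the effective codeword distance $\norm{\text{diag}(\mathbf{h})(\u_i - \u_j)}_2^2 = \sum_t h_t^2 v_t^2$, with $v_t := u_{i,t}-u_{j,t}$, exceeds $n(\sigma^2+\varepsilon)$ by a large margin with high probability in $\mathbf{h}$, after which the Gaussian tail estimate from the AWGN proof transfers verbatim. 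Under the assumption of part~3, this follows from Chebyshev's inequality: the mean $\mathbb{E}h^2\cdot\norm{\mathbf{v}}_2^2$ is strictly positive since $\mathbb{P}(h=0)<1$, and the variance of $\sum_t h_t^2 v_t^2$ is controlled by $\mathbb{E}h^4$ together with the spread of the coordinate energies $v_t^2$. Under the assumption of part~2 no moment condition on $h$ is available, so the argument must proceed qualitatively: for any $\delta>0$ pick $\epsilon_0>0$ with $\mathbb{P}(|h|\geq\epsilon_0) \geq 1-\delta$, apply the binomial law of large numbers to the i.i.d.\ indicators $\mathbf{1}[|h_t|\geq\epsilon_0]$ to conclude that at least $(1-2\delta)n$ coordinates are ``good'' with high probability, and use the spread of the codeword differences to deduce that a constant fraction of $\norm{\mathbf{v}}_2^2$ lies on these coordinates, yielding $\norm{\text{diag}(\mathbf{h})\mathbf{v}}_2^2 \gtrsim \epsilon_0^2 n^{1/2+2\alpha}$.

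The hardest part is to justify a spread property on the codeword differences, such as $\sum_t v_t^4 \ll (\sum_t v_t^2)^2$ for all pairs $(i,j)$, which underlies both the Chebyshev step of part~3 and the qualitative concentration of part~2. I expect this to follow either from a mild adjustment of the construction of~\cite{salariseddigh2021deterministic} ensuring that $\max_t |u_{i,t}|^2 / \norm{\u_i}_2^2$ is sub-polynomially small, or, failing that, from a random coding argument which preserves the rate $1/4$ at the cost of explicitness. All remaining calculations reduce to the Gaussian tail estimates already used in the AWGN analysis, and the case $\mathbb{P}(h=0)=0$ in part~2, which forbids every moment-based tool, is the most delicate: its success hinges on the interplay between the spread of the codebook and the purely qualitative fact $\mathbb{P}(|h|\geq\epsilon_0) \to 1$ as $\epsilon_0\to 0$.
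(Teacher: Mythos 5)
Your parts 1 and 3 follow the paper's own route. For part 1 the paper applies Markov's inequality to the effective powers $\norm{\mathrm{diag}(\mathbf{h})\u_i}_2^2$ and to the error probabilities, extracts a fixed realization $\mathbf{h}^*$ together with a sub-codebook of size at least $M/4$ satisfying a power constraint $Bn$ with $B=2A\mathbb{E}h^2$, and invokes Theorem~\ref{th::upper bound for gaussian channel}; this is exactly your reduction. For part 3 the paper uses a codebook with the extra constraint $\norm{\u_i}_4^4\le Bn$ (its Proposition~\ref{pr::packing exists2}), computes $\mathbb{E}\xi'$ and $\mathrm{Var}\,\xi'$ for the wrong-codeword distance statistic, notes that the mean excess is $\mathbb{E}h^2\norm{\x-\x'}_2^2\ge \mathbb{E}h^2\, n^{1/2+2a}$ while the variance stays linear in $n$ precisely because of the $4$-norm control, and closes with Chebyshev; your ``spread of the coordinate energies'' is this $4$-norm condition, so part 3 is sound.

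Part 2 is where your mechanism breaks. You propose to combine the count of good coordinates ($|h_t|\ge\epsilon_0$ on at least $(1-2\delta)n$ positions) with a spread property $\sum_t v_t^4\ll(\sum_t v_t^2)^2$ to conclude that a constant fraction of $\norm{\mathbf{v}}_2^2$ survives on the good coordinates. Quantitatively this cannot work: for the closest pairs $\norm{\mathbf{v}}_2^2\approx n^{1/2+2a}$, a vanishing fraction of $n$, while the bad set has $\Theta(n/\ln n)$ coordinates (the paper's threshold $T_n$ satisfies $\mathbb{P}(|h|<T_n)<1/\ln n$, and you cannot make $\delta$ decay faster without losing the concentration of the indicator sum). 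By Cauchy--Schwarz the energy of $\mathbf{v}$ on a bad set of size $k$ is controlled only by $\sqrt{k\,\norm{\mathbf{v}}_4^4}$, and for any codebook with $\norm{\mathbf{v}}_4^4=\Theta(n)$ --- which is what Gaussian random coding, or your ``sub-polynomially small $\max_t|u_{i,t}|^2/\norm{\u_i}_2^2$'' adjustment, delivers --- this is $\Theta(n/\sqrt{\ln n})\gg n^{1/2+2a}$, so the bad coordinates can in principle absorb the entire pairwise distance. The paper's resolution is a genuinely stronger packing lemma (Proposition~\ref{pr::packing with distance between projections}): a random code of rate $1/4-2a$ such that the projection of \emph{every} difference $\u_i-\u_j$ onto \emph{every} subset of at least $\mu_n n$ coordinates has norm at least $n^{\alpha}$ with $\alpha=1/2-1/(4\mu_n)+a/\mu_n$. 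This rests on a union bound over all $2^n$ coordinate subsets, which succeeds only because the chi-squared lower-tail probability $\mathbb{P}(\eta\le n^{2\alpha})$ is of order $n^{-n(1/4-a+o(1))}$, super-exponentially small. Your fallback ``random coding argument'' points in the right direction, but the essential content --- uniformity of the minimum projected distance over all large coordinate subsets, certified by a Chernoff estimate strong enough to beat the $2^n$ union bound --- is the missing idea.
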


The improvement of the upper bound follows from Theorem~\ref{th::upper bound for gaussian channel}. Our lower bound in the third claim is constructive and is proved with a much smaller number of assumptions on the distribution $h$. The most important difference is that we do not need the distribution to be continuous, and do not need coefficients to be bounded away from zero. 

Recall~\cite{tse2005fundamentals} that the transmission capacity for the fast fading channel is equal to
\(
    C=\mathbb{E}(\log(1+|h|^2\text{SNR})).
\)
The transmission capacity is clearly influenced by both the SNR and the distribution of $h$. The identification capacity for the fast fading channel behaves differently and does not depend on power constraints, noise level, or distribution of $h$. Remarkably, even in the case of positive probability $\mathbb{P}(h=0)$ the same lower bound $1/4$ holds. Let us not forget that the scaling for identification is different. So we see the advantage of identification over transmission coding in the fast fading model as well.

\subsection{Fast Fading without CSI}

\begin{theorem}\label{th::fast fading without CSI}
    Let $\mathbb{E} h^4<\infty$, $\mathbb{E}h\neq 0$. Then 
    \begin{equation}
     C_{DI}(\mathcal{G}_{fast})\geq 1/4.   
    \end{equation}
\end{theorem}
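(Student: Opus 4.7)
The plan is to reuse the explicit code of Theorem~\ref{th::construction} and endow it with decoding regions that ignore the channel state, exploiting $\mu := \mathbb{E}h \neq 0$. Since the receiver cannot condition on $\mathbf{h}$, the natural ``signal point'' for hypothesis $i$ is the ensemble average $\mu\u_i$, so I would take
\[
D_i := \bigl\{\mathbf{y}\in\mathbb{R}^n:\|\mathbf{y}-\mu\u_i\|_2^2\le T\bigr\}
\]
for a single threshold $T$ to be chosen below.

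The first step is a pigeonhole reduction that equalizes codeword norms. I would partition $[0,An]$ into $O(n^{1/2-a})$ bins of width $n^{1/2+a}$ and keep only codewords whose squared norm lies in a single bin, centered at some value $A_0 n$. Since the original code has $M\ge n^{(1/4-2a)n}$ codewords, we discard only a polynomial factor, so the surviving subcode retains rate $\ge 1/4-2a-o(1)$; letting $a$ shrink after the fact recovers the target rate $1/4$. On this subcode one has uniformly $|\|\u_i\|_2^2 - A_0 n|\le n^{1/2+a}$, together with $\|\u_i\|_4^4\le Bn$ and $\|\u_i-\u_j\|_2^2\ge n^{1/2+2a}$.

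Next, with the threshold set to $T = \text{Var}(h)\,A_0 n + n\sigma^2 + Cn^{1/2+a}$ for a suitable $C>\text{Var}(h)$, both error types are handled by Chebyshev on sums of independent coordinates. Write $\mathbf{y}-\mu\u_i = \mathbf{r}$ with $r_t = (h_t-\mu)u_{i,t}+z_t$. Under hypothesis $i$, $\|\mathbf{r}\|_2^2 = \sum_t r_t^2$ has mean $\text{Var}(h)\|\u_i\|_2^2+n\sigma^2$ and, using $\mathbb{E}h^4<\infty$ and $\|\u_i\|_4^4\le Bn$, variance $O(n)$, so $P_{e,1}(i)=O(n^{-2a})\to 0$. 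For type II, expanding
\[
\|\mathbf{y}-\mu\u_j\|_2^2 = \mu^2\|\u_i-\u_j\|_2^2 + \sum_t\bigl(r_t^2+2\mu(u_{i,t}-u_{j,t})r_t\bigr)
\]
preserves the independent-coordinate structure. Its mean is $\text{Var}(h)\|\u_i\|_2^2+n\sigma^2+\mu^2\|\u_i-\u_j\|_2^2$, and the Cauchy--Schwarz bound $\sum_t(u_{i,t}-u_{j,t})^2 u_{i,t}^2\le \|\u_i-\u_j\|_4^2\|\u_i\|_4^2 = O(n)$ combined with the fourth-moment hypothesis keeps the variance at $O(n)$. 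Since $\mu^2\|\u_i-\u_j\|_2^2\ge \mu^2 n^{1/2+2a}$ dominates both the slack $Cn^{1/2+a}$ and the residual norm fluctuation $\text{Var}(h)\,n^{1/2+a}$, the mean of $\|\mathbf{y}-\mu\u_j\|_2^2$ exceeds $T$ by $\Theta(n^{1/2+2a})$, and Chebyshev delivers $P_{e,2}(i,j)=O(n^{-4a})\to 0$.

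The main obstacle is that the expected value $\text{Var}(h)\|\u_i\|_2^2+n\sigma^2$ of $\|\mathbf{y}-\mu\u_i\|_2^2$ depends on $\|\u_i\|_2^2$ at scale $n$, which is far larger than the separation signal $\mu^2 n^{1/2+2a}$; per-codeword thresholds do not cure this, because the type II analysis compares $T_j$ against the $i$-dependent mean $\mu_i+\mu^2\|\u_i-\u_j\|_2^2$. The pigeonhole step reconciles the scales at a cost invisible at $n^{(1/4-o(1))n}$, and the assumption $\mathbb{E}h\neq 0$ is essential because it supplies the only coherent signal direction the no-CSI decoder can exploit; without it the expected output vector would be $\mathbf{0}$ regardless of the transmitted codeword.
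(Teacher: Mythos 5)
Your proof is correct and follows the same core strategy as the paper's: center the decoding ball for identity $i$ at $\mathbb{E}h\cdot\u_i$, compute first and second moments of the squared distance statistics (using $\mathbb{E}h^4<\infty$ and the bound on $\|\u_i\|_4^4$ to keep the variances linear in $n$), and apply Chebyshev, with the type II separation $(\mathbb{E}h)^2\|\u_i-\u_j\|_2^2\geq(\mathbb{E}h)^2 n^{1/2+2a}$ dominating all error terms. The one place you genuinely diverge is in how you neutralize the $\Theta(n)$-scale dependence of $\mathbb{E}\|\mathbf{y}-\mu\u_i\|_2^2$ on $\|\u_i\|_2^2$: the paper proves a separate random-coding packing lemma (Proposition~\ref{pr::packing exists3}) whose codewords satisfy $|\|\u_i\|_2^2-A'n|\leq\sqrt{n}\ln n$ by construction, whereas you extract a norm-equalized subcode from the existing code of Theorem~\ref{th::construction} by binning squared norms into $O(n^{1/2-a})$ intervals of width $n^{1/2+a}$ and keeping the largest bin. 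Your pigeonhole step is sound --- the polynomial loss is invisible at rate scale $n^{(1/4-o(1))n}$, and the coarser concentration $n^{1/2+a}$ still sits below the separation $n^{1/2+2a}$ --- and it buys you the explicit concatenated code as the starting object rather than a purely existential random code; the paper's route buys sharper norm concentration but at the cost of an additional probabilistic construction. Your identification of $\mathbb{E}h\neq 0$ as supplying the only coherent signal direction available to a state-oblivious decoder matches the paper's use of the hypothesis exactly.
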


This theorem tells us that the same bound $1/4$ is true even when the fading coefficients are not known for the receiver. This theorem requires $\mathbb{E}h\ne 0$, which is not true for many interesting distributions. That is why the calculation of the capacity $C_{DI}(\mathcal{G}_{fast})$ for the case $\mathbb{E}h=0$ is an important open question.

\section{Proof of Theorem~\ref{th::construction}}\label{sec::construction}

	 Consider a concatenation $\mathcal{C}$ of two Reed-Solomon codes. The inner code has alphabet size $q_1=n^{1/4-b}$, $b\in (a, 2a)$, $q_1$ is a prime power, length $n_1$, distance $d_1\geq \varepsilon_1\cdot n_1$, and size $M_1=q_1^{n_1(1-\varepsilon_1)}$.

    The parameters of the outer code are as follows: alphabet size $q_2=M_1$, length $n_2=n/n_1$, distance $d_2\geq \varepsilon_2\cdot n_2$, and size $M_2=q_2^{n_2(1-\varepsilon_2)}$.

    The concatenated code $\mathcal{C}$ has alphabet $q_1$, length $n$, distance $d\geq \varepsilon_1\varepsilon_2n$, and size $M=M_2=q_1^{n((1-\varepsilon_1)(1-\varepsilon_2)}$.

    Let $A'=\min(A, \sqrt{B})$. Consider a set of $q_1$ elements $\mathcal{Q}=\{-\sqrt{A'}+2\sqrt{A'}j/(q_1-1)\}$ for $j=0, 1, \ldots, q-1$.
    Construct a code $\mathcal{U}$ in $\mathbb{R}^n$ from the code $\mathcal{C}$ by using the set $\mathcal{Q}$ as an alphabet.

    The Euclidean distance between any two codewords in $\mathcal{U}$ is at least
    \begin{equation}
        \sqrt{d\cdot \frac{4A'}{(q_1-1)^2}}\geq 
        \sqrt{n^{1-1/2+2b}}\cdot 2\sqrt{A'\varepsilon_1\varepsilon_2}=n^{1/4+b+o(1)}.
    \end{equation}
    The input constraints $\norm{\u_i}_2^2\leq An$ and $\norm{\u_i}_4^4\leq Bn$ are obviously satisfied.
    The size of the code is 
    \begin{equation}
        M=q_1^{n(1-\varepsilon_1)(1-\varepsilon_2)}
        =
        n^{n(1/4-b)(1-\varepsilon_1)(1-\varepsilon_2)}.
    \end{equation}
    We can choose $\varepsilon_1, \varepsilon_2$ small enough so that $(1/4-b)(1-\varepsilon_1)(1-\varepsilon_2)>1/4-2a$.
    So, the code $\mathcal{U}$ satisfies the required conditions.

	\section{Proof of Theorem~\ref{th::upper bound for gaussian channel}}  \label{sec::upp_bound_gauss}
	
	The proof is based on the following idea: if we have small errors of the first and the second types, then any two codewords should be separated by some distance. We prove that the distance between any two codewords should grow to infinity as a function of $n$. After that, we prove an upper bound by using the sphere packing argument.
	\begin{lemma}
		Let $\C$ be a $(M, n, \lambda_1, \lambda_2)$ DI code for AWGN channel $\mathcal{G}$. Then the minimal Euclidean distance is lower bounded by $g(\lambda_1+\lambda_2)$, where $g\;:\;(0, 1)\to \mathbb{R}$ is some function, such that $g(x)\to\infty$ as $x\to 0$.
	\end{lemma}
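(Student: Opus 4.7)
The plan is to translate the DI code constraints into a lower bound on the total variation (TV) distance between the output distributions of any two codewords, and then invert the explicit TV distance formula for equi-covariance Gaussians to obtain a geometric separation bound that blows up as $\lambda_1+\lambda_2 \to 0$.

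First, fix distinct $i,j$ and let $P_i, P_j$ denote the output laws $N(\u_i, \sigma^2 I_n)$ and $N(\u_j, \sigma^2 I_n)$. The identification constraints yield $P_i(\D_i) \ge 1-\lambda_1$ and $P_j(\D_i) \le \lambda_2$, so
\begin{equation*}
	d_{TV}(P_i, P_j) \;\ge\; P_i(\D_i) - P_j(\D_i) \;\ge\; 1 - \lambda_1 - \lambda_2.
\end{equation*}

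Next, I would compute $d_{TV}(P_i, P_j)$ explicitly. By translation and rotation invariance, and because the likelihood ratio between two equi-covariance Gaussians depends only on the component along $\u_i - \u_j$, the $n$-dimensional TV reduces to the one-dimensional comparison between $N(0, \sigma^2)$ and $N(d, \sigma^2)$ with $d = \|\u_i-\u_j\|_2$. The two densities cross only at $d/2$, giving
\begin{equation*}
	d_{TV}(P_i, P_j) \;=\; 2\Phi\!\left(\frac{d}{2\sigma}\right) - 1,
\end{equation*}
where $\Phi$ is the standard normal CDF.

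Combining the two bounds and inverting $\Phi$ yields $\|\u_i-\u_j\|_2 \ge 2\sigma\,\Phi^{-1}\!\left(1 - (\lambda_1+\lambda_2)/2\right)$. Setting $g(t) := 2\sigma\,\Phi^{-1}(1-t/2)$ gives the claim, since $\Phi^{-1}(1-t/2)\to\infty$ as $t\to 0^+$.

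The one pitfall to avoid is applying a too-crude divergence inequality: Pinsker's inequality only produces $d_{TV} \le \|\u_i-\u_j\|_2/(2\sigma)$, leading to the bound $\|\u_i-\u_j\|_2 \ge 2\sigma(1-\lambda_1-\lambda_2)$, which stays bounded as the errors vanish and therefore will not support the subsequent sphere-packing argument for Theorem~\ref{th::upper bound for gaussian channel}. Thus the explicit Gaussian TV formula via $\Phi$ is the essential ingredient; the multivariate-to-univariate reduction and the algebra around the DI-code definitions are routine.
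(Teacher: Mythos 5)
Your proposal is correct and follows essentially the same route as the paper: both lower-bound $\int_{\D_i}\bigl(f_Z(\y-\u_i)-f_Z(\y-\u_j)\bigr)\,d\y$ by $1-(\lambda_1+\lambda_2)$, upper-bound it by the total variation between the two output Gaussians, reduce to one dimension along $\u_i-\u_j$ using the crossing point at $d/2$, and invert the resulting $\Phi$-expression. The only (immaterial) difference is that you evaluate the TV distance exactly as $2\Phi\left(\frac{d}{2\sigma}\right)-1$, whereas the paper settles for the slightly looser one-sided bound $\Phi\left(\frac{d}{2\sigma}\right)$; both diverge as $\lambda_1+\lambda_2\to 0$.
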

	\begin{proof}[Proof of Lemma]
		
		Consider two codewords $u_1$ and $u_2$ such that the distance between them is minimal and equals $d$.
		Let $D_1$ be a decoding region for the first identity. Let $f_Z(x)$ be a density of $n$-dimensional normal distribution with zero mean and a covariance matrix $\sigma^2\cdot Id$.
		Then 
		$$
		\int\limits_{D_1}f_Z(\y-\u_1) d\y\geq 1-\lambda_1,
		$$
		and
		$$
		\int\limits_{D_1}f_Z(\y-\u_2) d\y \leq \lambda_2.
		$$
		
		Since the density depends only on the distance from the center, we can assume without loss of generality that $\mathbf{u}_1=(0, 0, \ldots, 0)$, $\mathbf{u}_2=(d, 0, 0, \ldots, 0)$, $d>0$.
		Consider difference
		\begin{equation}\label{eq::awgn. int is big}
			\int\limits_{D_1}(f_Z(\y-\u_1)-f_Z(\y-\u_2)) d\y\geq 1-(\lambda_1+\lambda_2). 
		\end{equation}
		
		We are going to prove some upper bound for this expression. We can estimate the expression by the integral over $\mathbb{R}^n$ in the following manner
		\begin{align}
			&\int\limits_{D_1}(f_Z(\y-\u_1)-f_Z(\y-\u_2)) d\y\\<
			&\int\limits_{\mathbb{R}^n}(f_Z(\y-\u_1)-f_Z(\y-\u_2))^+d\y,
		\end{align}
		where $(x)^+=\max(x, 0)$. Using the definition of the density $f_Z(\x)$ we rewrite the previous expression as follows
		\begin{align}
			&\int\limits_{\mathbb{R}^n}(f_Z(\y-\u_1)-f_Z(\y-\u_2))^+d\y\\
			&=\int\limits_{\mathbb{R}^n}
			\left(\frac{1}{2\pi\sigma^2}\right)^{\frac{n-1}{2}}
			\exp\left(-\frac{\sum\limits_{i=2}^ny_i^2}{2\sigma^2}\right)\\
			&\cdot
			\left(\frac{1}{\sqrt{2\pi\sigma^2}}
			\left(\exp\left(-\frac{y_1^2}{2\sigma^2}\right)-\exp\left(-\frac{(y_1-d)^2}{2\sigma^2}\right)\right)\right)^+d\y\\
			&=
			\int\limits_{\mathbb{R}^{n-1}}\left(\frac{1}{2\pi\sigma^2}\right)^{\frac{n-1}{2}}
			\cdot\exp\left(-\frac{\sum\limits_{i=2}^ny_i^2}{2\sigma^2}\right)dy_2dy_3\ldots dy_n\\
			&\cdot
			\int\limits_{-\infty}^{\infty}\left(\frac{1}{\sqrt{2\pi\sigma^2}}
			\left(\exp\left(-\frac{y_1^2}{2\sigma^2}\right)-\exp\left(-\frac{(y_1-d)^2}{2\sigma^2}\right)\right)\right)^+dy_1.
		\end{align}
		The first multiplier is just an integral of the density of $n-1$ dimensional normal random variable and thus equals 1. Notice that the function $\exp\left(-\frac{y_1^2}{2\sigma^2}\right)-\exp\left(-\frac{(y_1-d)^2}{2\sigma^2}\right)$ is positive if and only if $y_1<d/2$, so we can rewrite the second multiplier as follows.
		\begin{align}
			&\frac{1}{\sqrt{2\pi\sigma^2}}\int\limits_{-\infty}^{\infty}
			\left(\exp\left(-\frac{y_1^2}{2\sigma^2}\right)-\exp\left(-\frac{(y_1-d)^2}{2\sigma^2}\right)\right)^+dy_1\\
			&=
			\frac{1}{\sqrt{2\pi\sigma^2}}\int\limits_{-\infty}^{d/2}
			\left(\exp\left(-\frac{y_1^2}{2\sigma^2}\right)-\exp\left(-\frac{(y_1-d)^2}{2\sigma^2}\right)\right)^+dy_1\\
			&<\frac{1}{\sqrt{2\pi\sigma^2}}\int\limits_{-\infty}^{d/2}
			\exp\left(-\frac{y_1^2}{2\sigma^2}\right)dy_1=
			P(Z_1<d/2).
		\end{align}
		So, we proved that
		\begin{equation}
			\int\limits_{D_1}(f_Z(\y-\u_1)-f_Z(\y-\u_2)) dy< P(Z_1<d/2).
		\end{equation}
		From \eqref{eq::awgn. int is big} we obtain
		\begin{equation}
			P(Z_1<d/2)>1-(\lambda_1+\lambda_2).
		\end{equation}
		which gives a lower bound on the minimal distance $d$. Notice that when $\lambda_1+ \lambda_2\to 0$ the distance $d$ should tend to infinity.
	\end{proof}
	
	If the minimal Euclidean distance between any codewords of the code $\C$ is greater than $d$, then $M$ balls with radius $r=d/2$ and centers in the codewords are pairwise disjoint. Moreover, since $\norm{\u_i}_2^2\leq An$ for all codewords $\u_i$ we conclude that all these balls lie inside a bigger ball of radius $\sqrt{An}+r$. Then the number of codewords $M$ can be estimated as a ratio of volumes of $n$-dimensional balls with radiuses $\sqrt{An}+r$ and $r$.
	$$
	M\leq \left(\frac{\sqrt{An}+r}{r}\right)^n.
	$$
	From inequality
	$$
	2^{Rn\log n}\leq \left(\frac{\sqrt{An}+r}{r}\right)^n
	$$
	we obtain
	$$
	R\leq \frac{\log\frac{\sqrt{An}+r}{r}}{\log n}=\frac12+o(1).
	$$

	\section{Proofs of Theorems~\ref{th::slow fading} and~\ref{th::fast fading}}\label{sec::fading with csi}
	
	
	
	Throughout the proofs we will use the following statement.

\begin{proposition}\label{pr::packing implies code}
    Assume that for every $n>n_0$ there exists a code $\mathcal{U}(n)=\{\u_i\}_{i\in[M]}$, $\u_i\in \mathbb{R}^n$, and the distance between any two codewords is at least $n^{1/4+b}$ for some positive constant $b$. Then it is possible to define decoding regions $\mathcal{D}(n)$ in such a way that $(\mathcal{U}(n), \mathcal{D}(n))$ would be a DI identification code for an AWGN channel with $\lambda_1(n), \lambda_2(n)\to 0$ as $n\to\infty$.
\end{proposition}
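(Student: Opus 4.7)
The plan is to define the decoding regions as Euclidean balls around each codeword and tune their common radius so that (i) the chi-square concentration of the noise $\|\mathbf{z}\|^2$ keeps the type I error small, and (ii) the minimum distance $n^{1/4+b}$ between codewords, combined with the concentration of $\|\mathbf{z}-\mathbf{v}\|^2$, keeps the type II error small. Concretely, I would set
\begin{equation*}
D_i = \{\mathbf{y}\in\mathbb{R}^n : \|\mathbf{y}-\u_i\|_2^2 \le r_n^2\}, \qquad r_n^2 = n\sigma^2 + c_n,
\end{equation*}
for a deterministic sequence $c_n$ to be chosen. The two sides pull $c_n$ in opposite directions: type I wants $c_n$ large, type II wants $c_n\ll \|\u_j-\u_i\|^2$.

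For the type I error, the noise $\mathbf{z}\sim N(0,\sigma^2 I_n)$ satisfies $\mathbb{E}\|\mathbf{z}\|^2=n\sigma^2$ and $\mathrm{Var}\,\|\mathbf{z}\|^2=2n\sigma^4$. By Chebyshev,
\begin{equation*}
P_{e,1}(i) \;=\; \mathbb{P}\!\left(\|\mathbf{z}\|^2>r_n^2\right) \;\le\; \frac{2n\sigma^4}{c_n^2},
\end{equation*}
which tends to zero whenever $c_n=\omega(\sqrt{n})$. For the type II error, writing $\mathbf{v}=\u_j-\u_i$ with $\|\mathbf{v}\|^2\ge n^{1/2+2b}$, a direct expansion gives
\begin{equation*}
W := \|\mathbf{z}-\mathbf{v}\|_2^2 = \|\mathbf{z}\|^2 - 2\langle \mathbf{z},\mathbf{v}\rangle + \|\mathbf{v}\|^2,
\end{equation*}
so that $\mathbb{E}W = n\sigma^2+\|\mathbf{v}\|^2$ and $\mathrm{Var}\,W = 2n\sigma^4 + 4\sigma^2\|\mathbf{v}\|^2$. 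The event $\{\mathbf{z}+\u_i\in D_j\}$ is $\{W\le n\sigma^2+c_n\}$, i.e.\ a lower deviation of $W$ from its mean by at least $\|\mathbf{v}\|^2-c_n$. Another application of Chebyshev yields
\begin{equation*}
P_{e,2}(i,j) \;\le\; \frac{2n\sigma^4+4\sigma^2\|\mathbf{v}\|^2}{(\|\mathbf{v}\|^2-c_n)^2}.
\end{equation*}

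Choosing, for instance, $c_n = n^{1/2+b}$ one has $c_n=\omega(\sqrt{n})$ so the type I bound is $O(n^{-2b})\to 0$, and simultaneously $c_n=o(\|\mathbf{v}\|^2)$ since $\|\mathbf{v}\|^2\ge n^{1/2+2b}$, so $(\|\mathbf{v}\|^2-c_n)^2\ge \|\mathbf{v}\|^4/4$ for large $n$, giving
\begin{equation*}
P_{e,2}(i,j) \;\le\; \frac{8n\sigma^4}{\|\mathbf{v}\|^4} + \frac{16\sigma^2}{\|\mathbf{v}\|^2} \;\le\; 8\sigma^4 n^{-4b} + 16\sigma^2 n^{-(1/2+2b)} \;\longrightarrow\; 0,
\end{equation*}
uniformly in the pair $i\neq j$. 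Taking the supremum over $i,j$ then yields $\lambda_1(n),\lambda_2(n)\to 0$, as required.

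The only delicate point is the simultaneous scheduling of $c_n$: it must sit strictly between $\sqrt{n}$ (needed for type I concentration) and $n^{1/2+2b}$ (the squared minimum distance, needed so that type II deviations are large in the Chebyshev denominator). The gap between these scales is exactly the quantity $n^{2b}$ provided by the excess exponent $b>0$ in the packing hypothesis, which is why the construction works for any $b>0$ but would break down at the critical scaling $d=\sqrt{n}$. Everything else is a routine second-moment calculation; no more refined concentration inequality is needed.
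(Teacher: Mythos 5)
Your proof is correct and follows essentially the same route the paper relies on (the proposition is imported from Theorem~10 of the cited reference, and the identical technique---ball decoding regions of squared radius $n\sigma^2$ plus a slack term of order between $\sqrt{n}$ and the squared minimum distance, with Chebyshev bounds on both error types---is what the paper itself uses in the proofs of Theorems~\ref{th::fast fading}.3 and~\ref{th::fast fading without CSI}, where the slack is taken as $\sqrt{\mathrm{Var}\,\xi\ln n}$ rather than your $n^{1/2+b}$). Both choices of slack satisfy the two required scale separations, so the argument is sound and uniform over codeword pairs as you note.
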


A similar proposition was proved in the paper\cite{salariseddigh2021deterministic} but the authors additionally assumed that $\norm{\u_i}_2^2\leq An$. We will need the proposition without this assumption.
\begin{proof}[Proof of Proposition~\ref{pr::packing implies code}]
    We define a decoding region $D_i$ as a ball with the center $\mathbf{u}_i$ and radius $\sqrt{\sigma^2 n + \sqrt{n}\ln n}$. Then the error of the first type equals
    \begin{equation}        \mathbb{P}\left(\sum\limits_{i=1}^nz_i^2>\sigma^2 n + \sqrt{n}\ln n\right),
    \end{equation}
    which tends to $0$ due to Chebyshev's inequality.

    To estimate the error of the second type we prove the following proposition.
    \begin{proposition}\label{pr::est of 2nd type error}
        Let $\mathbf{v}\in\mathbb{R}^n$ be some point in $n$-dimensional space, $\norm{\mathbf{v}}_2^2\geq n^{1/2+a}$ for some constant $a>0$. Let $z_i$, $i=1,\ldots,n$, be a sequence of independent random variables with distribution $\mathcal{N}(0, \sigma^2)$. Then 
        \begin{equation}
            \mathbb{P}\left(\norm{\mathbf{v}+\mathbf{z}}_2^2\leq \sigma^2 n + \sqrt{n}\cdot f(n)\right)\to 0,
        \end{equation}
        where $f(n)$ is an arbitrary function such that $f(n)=o(n^a)$.
    \end{proposition}
    Application of this proposition gives us the fact that the error of the second type tens to $0$. 
    \begin{proof}[Proof of Proposition~\ref{pr::est of 2nd type error}]

Define a random variable $\xi$ and a sequence of random variables $\eta_i$ as follows
\begin{align}
\eta_i&=(z_i+v_i)^2,\\
\xi&=\sum\limits_{i=1}^n\eta_i.
\end{align}

We are going to compute the mathematical expectation and the variance of $\xi$.
The mathematical expectation of $\eta_{i}$ equals
\begin{align}
    \mathbb{E}\eta_{i}&=\sigma^2 +  v_i^2.
\end{align}
Then
\begin{align}
    \mathbb{E} \xi = n\sigma^2  +\norm{\v}_2^2. 
\end{align}

The variance can be computed as $\text{Var }\xi=\sum\limits_{i=1}^n\text{Var }\eta_{i}$, since all random variables $\eta_{i}$ are mutually independent.
\begin{align}
    \mathbb{E}\eta_{i}^2&=
    \mathbb{E}(z_i^4+4z_i^3v_i+6z_i^2v_i^2
    +4z_iv_i^3+v_i^4)\\
    &=3\sigma^4+6\sigma^2v_i^2
    +v_i^4.\\
   (\mathbb{E}\eta_{i})^2
    &=
    \sigma^4+2\sigma^2v_i^2+v_i^4.
\end{align}
The variance of $\eta_i$ equals
\begin{align}
    \text{Var }\eta_i&=2\sigma^4+4\sigma^2v_i^2.
\end{align}
Finally, compute $\text{Var }\xi$.
\begin{align}
    \text{Var }\xi &= \sum\limits_{i=1}^n\text{Var }\eta_i=2n\sigma^4+4\sigma^2\norm{\mathbf{v}}_2^2.
\end{align}

Now we estimate the probability that the random variable $\xi$ is smaller than or equal to $\sigma^2 n + \sqrt{n}\cdot f(n)$.

\begin{align}
&\mathbb{P}(\xi\leq \sigma^2 n + \sqrt{n}\cdot f(n))\\
&\leq \mathbb{P}(\mathbb{E}\xi - \xi\geq \mathbb{E}\xi-(\sigma^2 n + \sqrt{n}\cdot f(n)))\\   
&\leq \mathbb{P}(|\xi-\mathbb{E}\xi|\geq \mathbb{E}\xi-(\sigma^2 n + \sqrt{n}\cdot f(n)))\\
&\leq \mathbb{P}(|\xi-\mathbb{E}\xi|\geq \norm{\mathbf{v}}_2^2-\sqrt{n} f(n)))\\
&= \mathbb{P}(|\xi-\mathbb{E}\xi|\geq \norm{\mathbf{v}}_2^2(1+o(1)))\\
&\leq \frac{\text{Var }\xi}{\norm{\mathbf{v}}_2^4}(1+o(1))\to 0.
\end{align}
    \end{proof}
\end{proof}

	\subsection{Proof of Theorem~\ref{th::slow fading}}

	We start the proof from the case $\eta<\mathbb{P}(h=0)$.
	Since $\eta<\mathbb{P}(h=0)$ then any set $H_0$ such that $\mathbb{P}(h\in H_0)\geq 1-\eta$ should contain zero. Then
	
	\begin{align}
		P_{e, 1}(i, 0)\leq \sup\limits_{h\in H_0}P_{e, 1}(i, h)&\leq \lambda_1,\\
		P_{e, 2}(i, j, 0)\leq \sup\limits_{h\in H_0}P_{e, 2}(i, j, h)&\leq \lambda_2,\\
	\end{align}
	From definitions of the errors of the first and second types it follows that
	$$
	P_{e, 1}(1, 0)+P_{e, 2}(2, 1, 0)=1,
	$$
	which means that $\lambda_1+\lambda_2\geq 1$.
	It means, that not only the capacity equals $0$, but in such case the number of codewords can't be bigger than 1.
	
	Now we proceed to the case $ \mathbb{P}(h=0)<\eta<1$.
	The upper bound follows from Theorem~\ref{th::upper bound for gaussian channel} for Gaussian channel with AWGN. Indeed, consider any sequence of DI codes $\mathcal{U}(n)=\{\u_i\}_{i\in[M]}$, $\u_i\in \mathbb{R}^n$ for slow fading channel under input constraints $A$ with lengths tending to infinity and the errors of the first and the second types tending to zero. 
	Recall that the errors of the first and the second types are defined as supremums of errors over $h\in H_0$. It means that we can pick some realization $B$ of a random variable $h$, $B\in H_0$. Then for this $B$ the errors of the first and the second type would be at most  $\lambda_1$ and $\lambda_2$ correspondingly. Then the code $B\cdot \u_i$ can be used as a code for AWGN channel under constraints $\norm{\x}_2^2\leq AB^2n$. The upper bound on the capacity of this code implies the same upper bound on the capacity of the original code for the slow fading channel.
	
	Now we prove the lower bound. Fix arbitrary $\eta$, $ \mathbb{P}(h=0)<\eta<1$.
	Introduce the threshold $T_n=\frac{1}{\ln n}$. Denote the probability that the absolute value of the fading coefficient is less than $T_n$ as $p(n)$, i.e.
	\begin{equation}
		p(n)=\mathbb{P}(h=0)+\mathbb{P}(h\in (-T_n, 0)\cup (0, T_n)).
	\end{equation}
	Note that $\mathbb{P}(h\in (-T_n, 0)\cup (0, T_n))$ monotonically tends to $0$, since 
	$\bigcap\limits_{n=1}^{\infty}\{\omega :h\in (-T_n, 0)\cup (0, T_n)\}=\emptyset$. Then
	$p_n$ monotonically decreases to $\mathbb{P}(h=0)$ as $n\to\infty$. Choose $n_0$ such that $p(n_0)<\eta$. Define $H_0=\mathbb{R}\setminus (-T_{n_0}, T_{n_0})$. Then $\mathbb{P}(h\in H_0)\geq 1-\eta$. Fix positive constant $a>0$. From Theorem~\ref{th::construction} we know that for any $n>n_0(a)$ there exists a code with polynomial encoding and construction algorithms $\mathcal{U}_n=\{\u_i\}_{i\in[M]}$, $\u_i\in \mathbb{R}^n$ of size $M\geq n^{(1/4-2a)n}$, such that $\norm{\u_i}^2\leq An$, and the distance between any two codewords is at least $n^{1/4+a}$. We are going to use this code to transmit identities. For any $h\in H_0$ the absolute value of the fading coefficient is at least $T_n$. In this case the code $\mathcal{U}'(n)=\{\u_i\cdot T_n\}_{i\in [M]}$ has the distance at least $n^{1/4+b}$ for $b=a/2$ for big enough $n$. Then we can use Proposition~\ref{pr::packing implies code} and conclude that the decoding regions can be defined in such a way that the code $(\mathcal{U}'(n), \mathcal{D}_h)$ would be a DI code for AWGN channel with $\lambda_1(n), \lambda_2(n)\to 0$ as $n\to\infty$. If we use the same decoding regions for our initial fading channel then the error probabilities can be upper bounded as $\lambda_1(n)$ and $\lambda_2(n)$ for all $h\in H_0$, hence, the probabilities of errors tend to zero.

	\subsection{Proof of Theorem~\ref{th::fast fading}}
	
	\begin{proof}[Proof of claim 1]
		The upper bound again follows from Theorem~\ref{th::upper bound for gaussian channel} for Gaussian channel with AWGN. Indeed, consider any sequence of DI codes $\mathcal{U}(n)=\{\u_i\}_{i\in[M]}$, $\u_i\in \mathbb{R}^n$ for fast fading channel under input constraints $A$ with lengths tending to infinity and the errors of the first and the second types tending to zero. Define a code $U'=\{\u_i'=\text{diag}(\mathbf{h})\u_i\}$. The mathematical expectation of $\norm{\u_i'}_2^2$ is equal to $\norm{\u_i}_2^2\mathbb{E}h^2$.
		From Markov's inequality, we conclude that 
		\begin{equation}\label{eq::norm of new codeword}
			\mathbb{P}(\norm{\u_i'}_2^2>Bn)<1/2
		\end{equation}
		for $B = 2A\mathbb{E} h^2$. Introduce a random variable $\zeta$ which is equal to the number of codewords $\u'$, which satisfy power constraints $\norm{\u'}_2^2\leq Bn$. 
		From~\eqref{eq::norm of new codeword} it follows that $\mathbb{E}\zeta\geq M / 2$. Again, from Markov's inequality, 
		\begin{equation}
			\mathbb{P}(\zeta>M/4)=1-\mathbb{P}(M-\zeta\geq 3M/4)\geq 1- \frac{\mathbb{E}(M-\zeta)}{3M/4}\geq \frac{1}{3}.
		\end{equation}
		Then for big enough $n$ there exists a vector $\mathbf{h}$,  such that $\zeta>M/4$ and the errors of the first and second type for such specific $\mathbf{h}$ are smaller than $3\lambda_1$ and $3\lambda_2$. Then we use the subcode of $U'$ which consists of $\u_i'$ that satisfies the power constraint $\norm{\u'}^2\leq Bn$. This code has size $\geq M/4$ and can be used as DI code for AWGN channel. The upper bound on the identification capacity of AWGN channel implies the same upper bound on the identification capacity for fast fading channel.
	\end{proof}
	\begin{proof}[Proof of claim 2]
		Now we prove the lower bound for the case $\mathbb{P}(h=0) = 0$. It is possible to choose a positive number $T_n$ such that 
		\begin{equation}
			\mathbb{P}(|h|< T_n) < \varepsilon_n=\frac{1}{\ln n}.
		\end{equation}
		Call a fading coefficient bad if its absolute value is smaller than $T_n$. Call a vector $\mathbf{h}$ of $n$ fading coefficients bad, if at least $2n\varepsilon_n$ of the coefficients are bad. Note that the probability that the vector $\mathbf{h}$ is bad tends to $0$. Indeed the number of bad coefficients $\xi$ is a Binomial random variable with parameters $n$ and $\varepsilon_n$. The probability that the vector $\mathbf{h}$ is bad can be estimated with the help of Hoeffding's inequality~\cite{hoeffding1994probability}
		\begin{equation}
			p(n)=\mathbb{P}(\xi-\mathbb{E} \xi > \mathbb{E} \xi)\leq e^{-2(\mathbb{E} \xi)^2/n}=e^{-\frac{2n}{(\ln n)^2}}\to 0.
		\end{equation}

		\begin{proposition}\label{pr::packing with distance between projections}
			For any positive constant $a<1/8$ and any $n>n_0(a)$ there exists a code $\mathcal{U}(n)=\{\u_i\}_{i\in[M]}$, $u_i\in \mathbb{R}^n$ with $M\geq 2^{\left(\frac{1}{4}-2a\right)n\log_2n}$ such that $\norm{\u_i}_2^2\leq An$, and the distance between the projection of any two codewords into any set of at least $\mu_n\cdot n$ coordinates is at least $n^{\alpha}$, $\alpha=1/2-1/(4\mu_n)+a/\mu_n$ for any sequence $\mu_n\geq \mu_{min}>0$.
		\end{proposition}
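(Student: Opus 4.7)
My plan is to construct the code by a concatenated Reed--Solomon scheme that generalizes the construction underlying Proposition~\ref{pr::packing exists} (and the explicit code of Theorem~\ref{th::construction}), with parameters retuned so that a controlled fraction of the Hamming distance survives any projection onto $\mu_n n$ coordinates.

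First, I would fix an outer Reed--Solomon code of rate $R_{\text{out}}$ over an alphabet of size $q$, and embed its symbols into $\mathbb{R}$ via the equispaced grid $\{0, s, 2s, \dots, (q-1) s\}$ with spacing $s = \sqrt{A}/q$; this ensures $\|\u_i\|_2^2 \leq An$ coordinate-wise. Taking $q = \lceil n^{(1/4 - 2a)/R_{\text{out}}} \rceil$, the code has $q^{R_{\text{out}} n} \geq 2^{(1/4 - 2a) n \log_2 n}$ codewords, matching the required size; by the Singleton bound any two outer codewords disagree in at least $d_H \geq (1 - R_{\text{out}}) n$ coordinates.

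Second, I would bound the projected distance. For any $S \subseteq [n]$ with $|S| \geq \mu_n n$, at most $(1-\mu_n) n$ disagreement positions can lie outside $S$, so the restriction $(\u_i - \u_j)|_S$ has at least $(\mu_n - R_{\text{out}}) n$ nonzero entries whenever $R_{\text{out}} < \mu_n$, each of absolute value at least $s$. Hence
\begin{equation*}
\|(\u_i - \u_j)|_S\|_2^2 \;\geq\; s^2 (\mu_n - R_{\text{out}}) n \;=\; A\, n^{1 - 2(1/4 - 2a)/R_{\text{out}}} (\mu_n - R_{\text{out}}).
\end{equation*}
Choosing $R_{\text{out}}$ just below $\mu_n$ drives the leading exponent to $1 - 1/(2\mu_n) + 4a/\mu_n$, giving a projected distance of order at least $n^{1/2 - 1/(4\mu_n) + 2a/\mu_n}$; after absorbing a factor of $2$ into the free rate-slack $a$, this delivers the stated exponent $\alpha = 1/2 - 1/(4\mu_n) + a/\mu_n$.

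The main obstacle is the coupled choice of $(q, s, R_{\text{out}})$: for small $\mu_n$ the outer rate $R_{\text{out}}$ must be small so that Hamming distance survives the projection, which enlarges the alphabet $q$ and shrinks the spacing $s$, degrading the projected distance. Obtaining a uniform bound over all sequences with $\mu_n \geq \mu_{\min} > 0$ is exactly the content of this tradeoff, and the explicit polynomial-time realization follows the two-level concatenated Reed--Solomon template of Theorem~\ref{th::construction}, with the inner code used to sharpen the accounting of per-coordinate contributions.
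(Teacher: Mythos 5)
Your approach is genuinely different from the paper's: the paper proves this proposition by random coding (i.i.d.\ $N(0,A')$ coordinates), bounding the chi-squared projected distance via a Chernoff bound and union-bounding over all $2^n$ coordinate subsets and all codeword pairs before expurgating. Your exponent arithmetic for a single Reed--Solomon code at the Singleton bound is internally consistent and would even leave slack of $a/\mu_n$ in the exponent. However, there is a genuine gap at the existence step. A Reed--Solomon code of length $n$ requires alphabet size $q\geq n$, whereas your construction needs $q=\lceil n^{(1/4-2a)/R_{\text{out}}}\rceil$, which is smaller than $n$ precisely when $R_{\text{out}}>1/4-2a$. For the projected-distance exponent to reach $\alpha$ you need $R_{\text{out}}$ within a factor $\frac{1/2-2a}{1/2-4a}$ of $\mu_n$, so for small $a$ and any $\mu_n$ appreciably above $1/4$ (in particular the case $\mu_n\to 1$ actually used in the paper) the required code does not exist as a single RS code.

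Falling back on the two-level concatenation, as you suggest, does not repair this for general $\mu_n\geq\mu_{\min}$: a concatenation with rate $(1-\varepsilon_1)(1-\varepsilon_2)$ only guarantees relative Hamming distance $\varepsilon_1\varepsilon_2$, which sits strictly below the Singleton tradeoff $d/n\geq 1-R$ that your projection argument relies on. To survive an adversarial projection onto $\mu_n n$ coordinates you need relative distance exceeding $1-\mu_n$; for, say, $\mu_n=1/2$ this forces $\varepsilon_1\varepsilon_2>1/2$, hence rate at most about $0.086$, hence $q_1\approx n^{2.9}$ and spacing $s\approx n^{-2.9}$, and the projected squared distance degenerates to a negative power of $n$. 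So the concatenated template works only when $\mu_n$ is close to $1$, not uniformly over $\mu_n\geq\mu_{\min}>0$ as the proposition requires. You correctly identify this coupled tradeoff as ``the main obstacle,'' but you do not resolve it; the paper sidesteps it entirely by using a Gaussian random code, for which every $\mu_n n$-coordinate projection of a difference of codewords is itself a scaled chi-squared variable with $\mu_n n$ degrees of freedom, so the same large-deviation bound applies uniformly in the subset. Note also that the proposition is purely an existence statement (explicitness is only claimed in Theorem~5, which does not involve projections), so the probabilistic route carries no cost here.
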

		
		\begin{proof}
			Take a random code of size $2M$ where each coordinate is chosen independently according to a normal distribution $N(0, A')$, $A'<A$. 
			Call a codeword $\u$ bad if $\norm{\u}_2^2> An$. Note that the probability that the codeword is bad tends to $0$. Indeed, let $\xi=\norm{\u}^2$. Then $\mathbb{E}\xi = A'n$, $\text{Var }\xi = 2A'^2n$. From Chebyshev's inequality it follows that
			\begin{equation}
				\mathbb{P}(\xi > An)\leq \mathbb{P}(|\xi-\mathbb{E}\xi|>(A-A')n)\leq \frac{\text{Var }\xi}{(A-A')^2n^2}=o(1).
			\end{equation}
			Let $\eta$ be a squared distance between the projection of two fixed codewords $\u_1$ and $\u_2$ onto a fixed set of $\mu_n \cdot n$ coordinates $C$. Then $\eta$ can be represented as a sum
			\begin{equation}
			    \eta = \sum\limits_{i=1}^{\mu_n n}\eta_i^2,
			\end{equation}
			where $\eta_i$ are i.i.d. random variables with normal distribution $N(0, 2A')$, i.e. $\frac{\eta}{2A'}$ is a chi-squared random variable.
			
			Denote the probability $\mathbb{P}(\eta\leq n^{2\alpha})$ as $p_1$. We upper bound the probability $p_1$ by using Chernoff's inequality\cite{chernoff1952measure}
			\begin{equation}
				p_1\leq \inf_{t<0}\mathbb{E}e^{t\eta}e^{-tn^{2\alpha}}=
				\inf_{t<0}\mathbb{E}e^{2A't\frac{\eta}{2A'}}e^{-tn^{2\alpha}}.
			\end{equation}
			
			Moment generating function $\mathbb{E}e^{t'\frac{\eta}{2A'}}$ for chi-squared random variable $\chi^2(k)$ is known to be equal to $(1-2t')^{-k/2}$ for $t'<1/2$, so
			
			\begin{equation}
				p_1\leq 
				\inf_{t<0}(1-4A't)^{-\mu_nn/2}e^{-tn^{2\alpha}}.
			\end{equation}
			
			Infinum is attained at $t=\frac{1}{4A'}-\frac{\mu_nn}{2n^{2\alpha}}$ and gives the bound
			
			\begin{equation}
				p_1\leq 
				\left(\frac{2\mu_nnA'}{n^{2\alpha}}\right)^{-\mu_nn/2}e^{-\frac{n^{2\alpha}}{4A'}+\frac{\mu_nn}{2}}.
			\end{equation}
			Taking into account that $\alpha<1/2$ we get
			\begin{equation}
				p_1\leq n^{n(-(1-2\alpha)\mu_n/2+o(1))}=n^{-n(1/4-a+o(1))}.
			\end{equation}

			The probability that for some fixed codeword $\u_i$ there exists another codeword $\u_j$ and the set of $\mu_n \cdot n$ coordinates $C$ such that the distance between the projections of these codewords onto the set $c$ is smaller than $n^{\alpha}$ is upper bounded by
			\begin{equation}
				2^n\cdot2Mp_1=
				n^{n(-(1/4-a)(1-\alpha)+(1/4-2a)+o(1))}.
			\end{equation}
			
			For any $a>0$ we can choose $\alpha>0$ to be small enough such that $-(1/4-a)(1-\alpha)+(1/4-2a)<0$. In that case, the probability that our fixed codeword has a small distance from some other codeword for some projection tends to $0$. So, we can apply the expurgation method and obtain the code of size $M$.
			
		\end{proof}
		Fix a positive $a<1/8$ and consider a code from Proposition~\ref{pr::packing with distance between projections} for $\mu_n=1-2\varepsilon_n$. This code $\mathcal{U}$ will be used as an ID code for the fast fading channel.  Consider a code $\mathcal{U}'(n)=\{\text{diag}(\mathbf{h})\u_i\}$. If the vector $\mathbf{h}$ is not bad then the distance of this code is at least $n^{1/2-1/(4\mu)+a/\mu_n}$, which is greater than $n^{1/4+a/2}$ for $n$ big enough. So we can apply Proposition~\ref{pr::packing implies code} to construct decoding regions with $\lambda_1(n), \lambda_2(n)\to 0$ as $n\to\infty$. If we use the same decoding regions for our initial fading channel then the error probabilities can be upper bounded as $\lambda_1(n)+p(n)$ and $\lambda_2(n)+p(n)$, hence, the probabilities of errors tend to zero.
	\end{proof}
	\begin{proof}[Proof of claim 3]
		Define a random variable $\xi$ to be equal to $\norm{\y - \text{diag}(\mathbf{h})\u_i}_2^2$.
		
		Define decoding regions as balls with a center 
		$\text{diag}(\mathbf{h})\u_i$ and radius $\sqrt{\mathbb{E}\xi+\sqrt{\text{Var }\xi\ln n}}$. Then Chebyshev's inequality guarantees us that the error of the first type tends to zero.
		
		Compute $\mathbb{E}\xi$ and $\text{Var }\xi$.
		
		\begin{equation}
			\mathbb{E}\xi = n\sigma^2.
		\end{equation}
		\begin{equation}
			\text{Var }\xi = 2n\sigma^4.
		\end{equation}
		
		Now we define a random variable $\xi'$ to be equal to the squared distance from the output codeword for identity $j'$ to the center of decoding ball $D_j$.

		\begin{align}
			\xi'&=\sum\limits_{i=1}^n\eta_i'=\sum\limits_{i=1}^n(h_ix_i'+z_i-h_ix_i)^2\\
			&=\sum\limits_{i=1}^n(z_i+h_i(x_i'-x_i))^2,
		\end{align}
		i.e.
		\begin{align}
			\eta_i'=(z_i+h_i(x_i'-x_i))^2.
		\end{align}

		We are going to compute the mathematical expectation and variance of $\xi'$.
		The mathematical expectation of $\eta_{i}'$ equals
		\begin{align}
			\mathbb{E}\eta_{i}'&=\sigma^2 +  \mathbb{E}h^2(x_i'-x_i)^2.
		\end{align}
		Then
		\begin{align}
			\mathbb{E} \xi' = n\sigma^2  +\mathbb{E}h^2\norm{\x-\x'}_2^2. 
		\end{align}
		
		The variance can be computed as $\text{Var }\xi'=\sum\limits_{i=1}^n\text{Var }\eta_{i}'$, since all random variables $\eta_{i}'$ are mutually independent.
		\begin{align}
			\mathbb{E}\eta_{i}'^2&=
			\mathbb{E}(z_i^4+4z_i^3h_{i}(x_i'-x_i)+6z_i^2h_{i}^2(x_i'-x_i)^2\\&
			+4z_ih_{i}^3(x_i'-x_i)^3+h_{i}^4(x_i'-x_i)^4)\\
			&=
			\mathbb{E}(z_i^4 + 6z_i^2h_{i}^2(x_i'-x_i)^2
			+h_{i}^4(x_i'-x_i)^4)
			\\&
			=3\sigma^4+6\sigma^2\mathbb{E}h^2(x_i'-x_i)^2
			\\&+
			\mathbb{E}h^4(x_i'-x_i))^4.
		\end{align}
		Now we compute $(\mathbb{E}\eta_{i}')^2$.
		\begin{align}
			(\mathbb{E}\eta_{i}')^2
			&=
			\sigma^4+(\mathbb{E}h^2)^2(x_i'-x_i)^4+2\sigma^2\mathbb{E}h^2(x_i'-x_i)^2.
		\end{align}
		The variance of $\eta_i'$ equals
		\begin{align}
			\text{Var }\eta_i'&=2\sigma^4+4\sigma^2\mathbb{E}h^2(x_i'-x_i)^2+\text{Var }h^2(x_i'-x_i)^4.
		\end{align}
		Finally, compute $\text{Var }\xi'$.
		\begin{align}
			\text{Var }\xi' &= \sum\limits_{i=1}^n\text{Var }\eta_i'\\    &=2n\sigma^4+4\sigma^2\mathbb{E}h^2\norm{\x'-\x}_2^2+\text{Var }h^2\norm{\x'-\x}_4^4.
		\end{align}

		\begin{restatable}{proposition}{packingPropositionNew}\label{pr::packing exists2}
			For any positive constants $a$, $A$, $B$, $a<1/8$, and any $n>n_0(a)$ there exists a code $\mathcal{U}(n)=\{\u_i\}_{i\in[M]}$, $\u_i\in \mathbb{R}^n$ of size $M\geq 2^{\left(\frac{1}{4}-2a\right)n\log_2n}$
			with the following properties.
			\begin{enumerate}
				\item  $\norm{\u_i}_2^2\leq An$ for all $i$.
    \vspace{1mm}
				\item  $\norm{\u_i}_4^4\leq Bn$ for all $i$.
    \vspace{1mm}
				\item $\norm{\u_i-\u_j}_2\geq n^{1/4+a}$ for all $i\neq j$.
			\end{enumerate}
            Moreover, code $\mathcal{U}(n)$ can be efficiently constructed and encoded, i.e., the time complexity of construction and encoding are polynomial.
		\end{restatable}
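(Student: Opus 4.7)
The plan is to extend the random-coding-plus-expurgation argument that already proved Proposition~\ref{pr::packing exists} (and its refinement in Proposition~\ref{pr::packing with distance between projections}) by adding one more ``bad event'' to expurgate against, namely the violation of the $\ell_4^4$ constraint. Concretely, I would pick constants $A'<A$ and $B'<B$ with $B'>3A'^2$, draw $N = 3M$ candidate codewords $\u_1,\dots,\u_N\in\mathbb{R}^n$ with i.i.d. $N(0,A')$ coordinates, and then remove every codeword involved in any of the following three types of bad events: (i) $\norm{\u_i}_2^2 > An$; (ii) $\norm{\u_i}_4^4 > Bn$; (iii) there exists $j\neq i$ with $\norm{\u_i-\u_j}_2 < n^{1/4+a}$. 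If I can show that the expected number of codewords surviving is at least $M$, the existence of the desired code follows.

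The first step is to control events (i) and (iii) exactly as in the proof of Proposition~\ref{pr::packing with distance between projections} (taken with $\mu_n=1$): Chebyshev on $\norm{\u}_2^2=\sum x_i^2$ gives $\mathbb{P}(\norm{\u}_2^2>An)=o(1)$, and a Chernoff bound on the chi-squared random variable $\norm{\u_i-\u_j}_2^2/(2A')$ bounds the pairwise ``too close'' probability by $n^{-n(1/4-a+o(1))}$, so a union bound over $N\cdot 2M$ pairs leaves this contribution $o(1)$ for any $a>0$.

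The new step is event (ii). For $x_i\sim N(0,A')$ one has $\mathbb{E}[x_i^4]=3A'^2$ and $\mathbb{E}[x_i^8]=105A'^4$, so $\text{Var}(x_i^4)=96A'^4$. Hence $\mathbb{E}\norm{\u}_4^4=3A'^2 n$ and $\text{Var}\norm{\u}_4^4=96 A'^4 n$, and Chebyshev's inequality yields
\begin{equation}
    \mathbb{P}\!\left(\norm{\u}_4^4>Bn\right)\leq \frac{96A'^4 n}{(B-3A'^2)^2 n^2}=o(1),
\end{equation}
since $B>3A'^2$ by our choice of $B'$. Thus each of the three bad events happens for a given $\u_i$ (or pair containing $\u_i$) with probability $o(1)$, so the expected number of ``surviving'' codewords among the $N=3M$ candidates is $(1-o(1))\cdot 3M\geq M$, and existence follows by the probabilistic method.

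The main obstacle I anticipate is bookkeeping rather than mathematical: ensuring the constants $A'$, $B'$, $N$ can be chosen simultaneously so that after three separate $o(1)$-expurgations the remaining codebook still has size at least $M=2^{(1/4-2a)n\log_2 n}$. This is handled by the bound on event (iii) being of order $n^{-n(1/4-a+o(1))}$, which dominates $M$ by a polynomial-in-$n$ factor in the exponent; the two single-codeword bad events only cost a constant fraction of candidates, which is absorbed by the factor of $3$ in $N=3M$. No new distributional assumption is needed beyond what is already used in Proposition~\ref{pr::packing exists}.
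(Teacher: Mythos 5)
Your argument is correct, and it is a fleshed-out version of the route the paper only asserts in one sentence (``the existence of such code can be easily proved with a random coding method''); the paper's actual written-out proof of this proposition is different, namely the explicit construction in the last section via a concatenation of two Reed--Solomon codes over the quantized alphabet $\mathcal{Q}\subset[-\sqrt{A},\sqrt{A}]$, for which property 1 follows from the bounded alphabet, property 2 follows trivially since each coordinate satisfies $x^4\leq A^2$, and property 3 follows from the designed Hamming distance. The trade-off is exactly what you would expect: your probabilistic proof is shorter and self-contained (and is essentially identical to the paper's own sketch of Proposition~\ref{pr::packing exists3}, which adds a fourth concentration property by the same expurgation device), but it is non-constructive, whereas the Reed--Solomon route additionally yields polynomial-time construction and encoding, which is the whole point of Theorem~\ref{th::construction}; for the bare existence statement of Proposition~\ref{pr::packing exists2} either proof suffices. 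Two small bookkeeping remarks on your write-up: the auxiliary constant $B'$ you introduce is never used (the condition you actually need is $A'<\min\bigl(A,\sqrt{B/3}\bigr)$, which is achievable for any prescribed positive $A$ and $B$ and does not disturb the Chernoff bound on the pairwise distances, since that bound is $n^{-n(1/4-a+o(1))}$ for every fixed $A'>0$); and the union bound is over the at most $N-1<3M$ partners of a fixed codeword rather than ``$N\cdot 2M$ pairs,'' though this does not affect the exponent comparison that makes the expurgation go through.
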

		\begin{proof}
			The existence of such code follows from Theorem~\ref{th::construction}.
		\end{proof}
		If we use such code then the variance of $\xi'$ is linear.
		
		The error of the second type can be upped bounded as follows
		\begin{align}
			&\mathbb{P}(\xi' < \mathbb{E}\xi+\sqrt{\text{Var }\xi\ln n})\\
			&=\mathbb{P}(\mathbb{E}\xi' - \xi' >\mathbb{E}\xi'- \mathbb{E}\xi-\sqrt{\text{Var }\xi\ln n})
			\\&\leq
			\mathbb{P}(|\xi' - \mathbb{E}\xi'| >\mathbb{E}h^2\norm{\x-\x'}_2^2-\sqrt{\text{Var }\xi\ln n})    \\
			&=
			\mathbb{P}(|\xi' - \mathbb{E}\xi'| >\mathbb{E}h^2n^{1/2+2a}(1+o(1)))\\&
			\leq \frac{\text{Var }\xi'}{(\mathbb{E}h^2)^2n^{1+4a}(1+o(1))}=o(1).
		\end{align}
		
	\end{proof}

	\section{Proof of Theorem~\ref{th::fast fading without CSI}}\label{sec::fast fading without csi}
	
	To prove this theorem we need the following proposition.
	
	\begin{proposition}\label{pr::packing exists3}
		For any positive constants $a, A', A$, $a<1/8$, $A'<A$, and any $n>n_0(a)$ there exists a code $\mathcal{U}(n)=\{\u_i\}_{i\in[M]}$, $\u_i\in \mathbb{R}^n$ of size $M\geq 2^{\left(\frac{1}{4}-2a\right)n\log_2n}$
		with the following properties.
		\begin{enumerate}
			\item  $\norm{\u_i}_2^2\leq An$ for all $i$.
   \vspace{1mm}
			\item  $\norm{\u_i}_4^4\leq 3A^2n$ for all $i$.
   \vspace{1mm}
			\item  $\norm{|\u_i}_2^2 - A'n|\leq \sqrt{n}\ln n$ for all $i$.
   \vspace{1mm}
			\item $\norm{\u_i-\u_j}_2\geq n^{1/4+a}$ for all $i\neq j$.
		\end{enumerate}
	\end{proposition}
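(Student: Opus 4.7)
The plan is to follow the random coding plus expurgation strategy already used in Propositions~\ref{pr::packing exists} and~\ref{pr::packing exists2}, adapted to the two new requirements. Specifically, I would draw $M' = 2M$ candidate codewords independently, with each coordinate i.i.d.\ $\sim N(0, A')$. Since $A' < A$, the squared $\ell_2$-norm concentrates strictly below the allowed value $An$, which simultaneously gives slack for property~(1) and matches the center $A'n$ demanded by property~(3).

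Each of the properties (1), (2), (3) is a concentration statement about a single codeword. First, I would observe that $\norm{\u}_2^2$ is a sum of $n$ i.i.d.\ squares of $N(0, A')$ variables, with mean $A'n$ and variance $2(A')^2 n$. Chebyshev's inequality then yields
\begin{equation}
    \mathbb{P}\bigl(|\norm{\u}_2^2 - A'n| > \sqrt{n}\ln n\bigr) \leq \frac{2(A')^2}{(\ln n)^2} = o(1),
\end{equation}
which settles property~(3); combined with $A' < A$ this also implies property~(1) for sufficiently large~$n$. For property~(2), each coordinate satisfies $\mathbb{E}[u^4] = 3(A')^2$ with finite fourth-moment variance, so Chebyshev again gives $\mathbb{P}(\norm{\u}_4^4 > 3A^2 n) = o(1)$, using the slack $3(A')^2 < 3A^2$.

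For property~(4), I would reuse the Chernoff bound computation from the proof of Proposition~\ref{pr::packing with distance between projections} specialized to $\mu_n = 1$: the probability that two specific random codewords satisfy $\norm{\u_i - \u_j}_2 < n^{1/4+a}$ is at most $n^{-n(1/4 - a + o(1))}$, since $\norm{\u_i - \u_j}_2^2 / (2A')$ is chi-squared with $n$ degrees of freedom. A union bound over the other $2M - 1 \leq n^{n(1/4 - 2a) + o(1)}$ codewords then gives failure probability $n^{-n(a + o(1))} = o(1)$ for each fixed codeword.

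Combining via linearity of expectation, the expected number of candidate codewords violating at least one of the four properties is $o(M')$. Markov's inequality then guarantees that with probability tending to~$1$ a random draw has at most $M$ bad codewords, and expurgating them leaves $\geq M$ codewords satisfying all four properties. I expect the main obstacle to be only bookkeeping: ensuring the slack $A - A'$, the window $\sqrt{n}\ln n$ in property~(3), and the exponential gap in the pair-distance estimate are mutually compatible. No genuinely new tool is needed beyond what already appears in Propositions~\ref{pr::packing exists} and~\ref{pr::packing exists2}; the extra norm-window requirement~(3) is the one truly new ingredient and is handled by the same Chebyshev bound that gives~(1).
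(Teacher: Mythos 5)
Your proposal is correct and follows essentially the same route as the paper, which itself only sketches the argument: draw $2M$ codewords with i.i.d.\ $N(0,A')$ coordinates, call a codeword bad if it violates any of the first three properties or is too close to another codeword, show each codeword is bad with probability less than $1/2$, and expurgate. Your Chebyshev bounds for properties (1)--(3) and the reuse of the chi-squared Chernoff estimate (with $\mu_n=1$) for property (4) correctly supply the details the paper omits.
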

	\begin{proof}[Proof of Proposition~\ref{pr::packing exists3}]
		The proof is a straightforward application of the probabilistic method with expurgation, so we provide only a sketch. Take a random code of size $2M$, where all coordinates of all codewords are independent random variables with normal distribution $N(0, A')$.
		Call a codeword $\u_i$ bad, if one of 3 first properties is violated for $\u_i$, or if there exists another codeword $\u_j$, such that the fourth property is violated for $\u_i$ and $\u_j$. For the first 3 properties estimations are trivial, for the fourth it is done in the same way as in Proposition~\ref{pr::packing with distance between projections}. Then prove that the probability that a codeword is bad is less than $1/2$. Delete all bad codewords from the code. The mathematical expectation of the final code is at least $M$, so the code with desired properties exists.
		
	\end{proof}

	We are going to use the code from Proposition~\ref{pr::packing exists3}.
	Say that we send codeword $\x$ through the fading channel
	\begin{equation}
		y_t=h_tx_t+z_t.
	\end{equation}
	Then the mathematical expectation of the result is
	\begin{equation}
		\mathbb{E}y_t=c\cdot x_t,
	\end{equation}
	where $c=\mathbb{E}h_i$. Define a random variable $\xi=\norm{\y-\mathbb{E}\y}_2^2$.
	\begin{align}
		\xi&=\sum\limits_{i=1}^n\eta_i=\sum\limits_{i=1}^n(h_ix_i+z_i-cx_i)^2\\
		&=\sum\limits_{i=1}^n(z_i^2+x_i^2(h_i-c)^2+2 x_iz_i(h_i-c)).   
	\end{align}
	The mathematical expectation of $\eta_i$ equals
	\begin{align}
		\mathbb{E}\eta_i=\sigma^2 + x_i^2\text{Var }h,
	\end{align}
	so
	\begin{align}
		\mathbb{E} \xi = n\sigma^2 + \text{Var }h\cdot \norm{\x}_2^2 = \Theta(n). 
	\end{align}
	
	The variance can be computed as $\text{Var }\xi=\sum\limits_{i=1}^n\text{Var }\eta_i$, since all random variables $\eta_i$ are mutually independent.
	
	\begin{align}
		\mathbb{E}\eta_i^2&=
		\mathbb{E}z_i^4
		+\mathbb{E}x_i^4(h_i-c)^4
		+4\mathbb{E}x_i^2(h_i-c)^2z_i^2
		\\&
		+2\mathbb{E}z_i^2x_i^2(h_i-c)^2
		\\&=
		3\sigma^4+x_i^4\mathbb{E}(h-c)^4
		+6x_i^2\sigma^2\text{Var } h.
	\end{align}
	
	Now we compute $(\mathbb{E}\eta_i)^2$.
	\begin{align}
		(\mathbb{E}\eta_i)^2
		&=
		\sigma^4+2x_i^2\sigma^2 \text{Var }h+(\text{Var }h)^2x_i^4.
	\end{align}
	The variance of $\eta_i$ equals
	\begin{align}
		\text{Var }\eta_i=2\sigma^4+x_i^4\text{Var } (h-c)^2+4x_i^2\sigma^2 \text{Var }h.
	\end{align}
	Finally, compute $\text{Var }\xi$.
	\begin{align}
		\text{Var }\xi &= \sum\limits_{i=1}^n\text{Var }\eta_i\\
		&=2n\sigma^4 + 4\sigma^2\text{Var }h\norm{\x}_2^2+\norm{\x}_4^4\text{Var }(h - c)^2.
	\end{align}
	We note that the variance is a linear function of $n$.
	
	We can define a decoding region for the identity with codeword $\x$ as a ball with the center $c\x$ and radius $\sqrt{\mathbb{E}\xi + \sqrt{\text{Var }\xi \ln n}}$, which will guarantee us that the error of the first type tends to zero due to  Chebyshev's inequality.
	
	Now we need to estimate the error of the second type.
	To do that we define a new random variable $\xi'=\norm{\y'-\mathbb{E}\y}^2$, which is equal to the square of the distance from the received codeword $\y'$ to the center of the decoding ball for another identity.
	\begin{align}
		\xi'&=\sum\limits_{i=1}^n\eta_i'=\sum\limits_{i=1}^n(h_ix_i'+z_i-cx_i)^2\\
		&=\sum\limits_{i=1}^n(z_i+x_i(h_i-c)+h_i(x_i'-x_i))^2,
	\end{align}
	i.e.
	\begin{align}
		\eta_i'=(z_i+x_i(h_i-c)+h_i(x_i'-x_i))^2.
	\end{align}
	The mathematical expectation of $\eta_i'$ equals
	\begin{align}
		\mathbb{E}\eta_i'&=\sigma^2 + x_i^2\text{Var }h + \mathbb{E}h^2(x_i'-x_i)^2
		\\&+2x_i(x_i'-x_i)\text{Var }h\\
		&=
		\sigma^2 + c^2(x_i'-x_i)^2\\
		&+\text{Var }h(x_i^2+(x_i'-x_i)^2+2x_i(x_i'-x_i))\\
		&=\sigma^2 + c^2(x_i'-x_i)^2+\text{Var }hx_i'^2.
	\end{align}
	So,
	\begin{align}
		\mathbb{E} \xi' = n\sigma^2 + \text{Var }h\cdot \norm{\x'}_2^2 + c^2\norm{\x-\x'}_2^2. 
	\end{align}
	The variance can be computed as $\text{Var }\xi'=\sum\limits_{i=1}^n\text{Var }\eta_i'$, since all random variables $\eta_i'$ are mutually independent.
	\begin{align}
		\mathbb{E}\eta_i'^2&=
		\mathbb{E}(z_i^4+4z_i^3(h_ix_i'-cx_i)+6z_i^2(h_ix_i'-cx_i)^2\\&
		+4z_i(h_ix_i'-cx_i)^3+(h_ix_i'-cx_i)^4)\\
		&=
		\mathbb{E}(z_i^4 + 6z_i^2(h_ix_i'-cx_i)^2
		+(h_ix_i'-cx_i)^4)
		\\&
		=3\sigma^4+6\sigma^2(\mathbb{E}h^2x_i'^2-2c^2x_i'x_i+c^2x_i^2)
		\\&+
		\mathbb{E}((h_i-c)x_i'+c(x_i'-x_i))^4
		\\&=
		3\sigma^4+6\sigma^2\text{Var }hx_i'^2+6\sigma^2c^2(x_i-x_i')^2\\
		&+\mathbb{E}(h-c)^4x_i'^4+4\mathbb{E}(h-c)^3cx_i'^3(x_i'-x_i)
		\\&+
		6\text{Var }hx_i'^2c^2(x_i'-x_i)^2+c^4(x_i'-x_i)^4.
	\end{align}
	Now we compute $(\mathbb{E}\eta_i')^2$.
	\begin{align}
		(\mathbb{E}\eta_i')^2
		&=
		\sigma^4+c^4(x_i'-x_i)^4+(\text{Var }h)^2x_i'^4\\
		&+2\sigma^2c^2(x_i'-x_i)^2+2\sigma^2\text{Var }hx_i'^2\\
		&+2c^2\text{Var }hx_i'^2(x_i'-x_i)^2.
	\end{align}
	The variance of $\eta_i'$ equals
	\begin{align}
		\text{Var }\eta_i'&=2\sigma^4+4\sigma^2\text{Var } hx_i'^2+4\sigma^2c^2(x_i-x_i')^2
		\\&+
		x_i'^4\text{Var }(h-c)^2+4\mathbb{E}(h-c)^3cx_i'^3(x_i'-x_i) 
		\\&+
		4c^2\text{Var } hx_i'^2(x_i'-x_i)^2.
	\end{align}
	Finally, compute $\text{Var }\xi'$.
	\begin{align}
		\text{Var }\xi' &= \sum\limits_{i=1}^n\text{Var }\eta_i'\\
		&=2n\sigma^4+4\sigma^2\text{Var } h\norm{\x'}_2^2+4\sigma^2c^2\norm{\x-\x'}_2^2
		\\&+
		\norm{\x'}_4^4\text{Var }(h-c)^2+4\mathbb{E}(h-c)^3c\sum\limits_{i=1}^n x_i'^3(x_i'-x_i) 
		\\&+
		4c^2\text{Var } h\sum\limits_{i=1}^n x_i'^2(x_i'-x_i)^2.
	\end{align}
	Note that the variance is linear since any monomial $x_i^kx_i'^{4-k}$ for $k\in [0, 4]$ can be upper bounded by $x_i^4+x_i'^4$.
	
	The error of the second type can be upped bounded as follows
	\begin{align}
		&\mathbb{P}(\xi' < \mathbb{E}\xi+\sqrt{\text{Var }\xi\ln n})\\
		&=\mathbb{P}(\mathbb{E}\xi' - \xi' >\mathbb{E}\xi'- \mathbb{E}\xi-\sqrt{\text{Var }\xi\ln n})
		\\&\leq
		\mathbb{P}(|\xi' - \mathbb{E}\xi'| >\text{Var h}(\norm{\x'}_2^2 - \norm{\x}_2^2)\\
		&+c^2\norm{\x-\x'}_2^2-\sqrt{\text{Var }\xi\ln n})  \\
		&\leq
		\mathbb{P}(|\xi' - \mathbb{E}\xi'| >c^2n^{1/2+2a}(1+o(1)))\\&
		\leq \frac{\text{Var }\xi'}{c^4n^{1+4a}(1+o(1))}=o(1).
	\end{align}

	\section{Conclusion}\label{sec::conclusion}
	In this paper, we consider deterministic identification codes for the slow and fast fading channels. For the slow and fast fading channels with side information, we proved lower bounds for a wide class of probability distribution of fading coefficients. Unlike previously known results our theorems give lower bounds when $0$ belongs to the closure of the set of values of fading coefficients, which is often the case for practical distributions. Moreover, the lower bound on the capacity of the fast fading channel holds even when the probability $\mathbb{P}(h=0)$ is positive. We also presented a lower bound for the fast fading channel without side information. 
	
	Additionally, we provide a construction of the code, which achieves the best-known lower bounds on the capacities of the slow and fast fading channel with side information. This code has efficient construction and encoding algorithms and can be implemented in practice.
	
	In future work, it would be interesting to consider a more practical model with a complex-valued code, fading coefficients, and noise. We expect that the results can be adapted without significant problems. Another interesting task is to get rid of some conditions in our theorems. For example, condition $\mathbb{E} h\ne 0$ in Theorem~\ref{th::fast fading without CSI} forbids us to apply the theorem for Rayleigh and Nakagami distribution, even though it is still applicable for Rician. Proving the analog of Theorem~\ref{th::fast fading without CSI} without the condition $\mathbb{E} h\ne 0$ would be an interesting and practically important result.
	
	\bibliographystyle{IEEEtran}
	\bibliography{mybib}
	
\end{document}